\documentclass{article}
\usepackage{graphicx}
\usepackage{subfigure}
\usepackage{amsfonts}
\usepackage{amsmath}
\usepackage{amssymb}
\usepackage{url}
\usepackage{fancyhdr}
\usepackage{indentfirst}
\usepackage{enumerate}
\usepackage{cancel}
\usepackage{dsfont}
  \usepackage[colorlinks=true,citecolor=blue,hyperfootnotes=false]{hyperref}
\usepackage{amsthm}
\usepackage{color}
\usepackage{natbib}
\usepackage{comment}
\usepackage{capt-of}
\usepackage{multirow}
\def\d{\mathrm{d}}

\newcommand{\Q}{\mathcal {Q}}
\newcommand{\rbf}{\mathbf{r}}
\newcommand{\q}{\mathbf{q}}

\newcommand{\X}{\mathcal {X}}

\newcommand{\var}{\mathrm{Var}}

\newcommand{\VaR}{\mathrm{VaR}}

\newcommand{\ES}{\mathrm{ES}}

\newcommand{\E}{\mathbb{E}}

\newcommand{\R}{\mathbb{R}}

\newcommand{\p}{\mathbb{P}}

\newcommand{\Var}{\mathrm{Var}}
\newcommand{\SD}{\mathrm{SD}}

\renewcommand{\ge}{\geqslant}

\renewcommand{\geq}{\geqslant}
\renewcommand{\leq}{\leqslant}
\renewcommand{\epsilon}{\varepsilon}

\renewcommand{\cdots}{\dots}

\theoremstyle{plain}
\newtheorem{theorem}{Theorem}
\newtheorem{corollary}{Corollary}

\newtheorem{proposition}{Proposition}
\theoremstyle{definition}

\newtheorem{example}{Example}

\newtheorem{conjecture}{Conjecture}
\theoremstyle{remark}
\newtheorem{remark}{Remark}

\definecolor{editgreen}{RGB}{0,128,0}

\newcommand{\cet}{\begin{center}}
\newcommand{\ecet}{\end{center}}

\newcommand{\cR}{\mathcal{R}}

\usepackage{setspace}
\newcommand{\com}[1]{\marginpar{{\begin{minipage}{0.18\textwidth}{\setstretch{1.1} \begin{flushleft} \footnotesize \color{red}{#1} \end{flushleft} }\end{minipage}}}}

\begin{document}

\title{
Portfolio Diversification and Concentration under Dependence Uncertainty: A Majorization Approach
}
\author{
    Peng Liu\thanks{\scriptsize School of Mathematics, Statistics and Actuarial Science, University of Essex, 
    UK. Email: \texttt{peng.liu@essex.ac.uk}}
    \and
    Yang Liu\thanks{\scriptsize School of Science and Engineering, The Chinese University of Hong Kong (Shenzhen), 
    China. Email: \texttt{yangliu16@cuhk.edu.cn}}
   }

\date{}
\maketitle



\begin{abstract}
Modern portfolio theory identifies diversification as the primary tool for risk reduction. However, under model uncertainty, this cornerstone may no longer remain optimal. This paper investigates the tension between portfolio diversification and concentration under dependence uncertainty. In the absence of model uncertainty, we employ the framework of the majorization order and doubly stochastic matrices to formalize the degree of diversification, and prove that quasi-convexity is a necessary and sufficient property for a risk functional to be weakly consistent with the majorization order. We further derive worst-case risk measure inequalities and solve robust portfolio selection problems for a broad class of risk measures, including VaR, ES, Range-VaR (RVaR), and standard deviation (SD). Our results reveal a ``concentration paradox'' for many widely-used risk functionals: when the dependence structure is fully ambiguous, robust optimization often recommends concentrating investment in a single asset to hedge against the worst-case dependence scenario. {As an application, we propose a weighted robustness formulation that interpolates between a reference dependence structure and the worst-case structure. The formulation is structurally analogous to the constrained/unconstrained Expected Shortfall blend in the Fundamental Review of the Trading Book (FRTB) and provides a theoretical foundation for balancing diversification against robustness in the presence of model uncertainty.}
\end{abstract}

\textbf{Keywords:} Dependence uncertainty; Robust portfolio selection; Majorization order; Value-at-Risk (VaR); Expected Shortfall (ES); Non-convex risk functionals. 


\section{Introduction}

Diversification has been a cornerstone of modern portfolio theory since \cite{M52}, yet practitioners routinely encounter episodes where concentration (sometimes intentional, sometimes forced) plays a decisive role in outcomes. It is well documented in the literature that portfolio selection fundamentally involves a strategic trade-off between \emph{diversification} and \emph{concentration} (e.g., \cite{BGUW12}). 


{The portfolio optimization problem is formulated as follows.} Let $\mathbf{X} = (X_1, \dots, X_n)^\top$
{be the vector of negative returns of $n$ assets in the market}, and $\boldsymbol{\lambda} = (\lambda_1, \dots, \lambda_n)^\top \in \Delta_n$ denote the portfolio weights, where the decision set $\Delta_n = \{\boldsymbol{\lambda} \in [0,1]^n: \sum_{i=1}^n \lambda_i = 1\}$ is the standard simplex, implying that short-selling is not allowed. The portfolio selection problem is to maximize the risk-adjusted expected return:
\begin{align}\label{RAR}
\E \left[-\boldsymbol\lambda^\top \mathbf X \right] - \kappa \rho (\boldsymbol\lambda^\top \mathbf X),
\end{align}
where $\kappa > 0$ and $\rho: \mathcal{X} \to \mathbb{R}$ is a risk functional. {If the asset returns have the same expected value, or if the portfolio is constrained to have a fixed expected return,} then the portfolio selection problem \eqref{RAR} reduces to the following form:
\begin{equation}\label{eq:opt1}
	\inf_{\boldsymbol{\lambda} \in \Delta_n} \rho(\boldsymbol{\lambda}^\top \mathbf X).
\end{equation}
If $\rho$ is the variance, Problem \eqref{eq:opt1} becomes a classical variance minimization problem. {For example, suppose $X_i \sim \mathrm{N}(\mu_i, \sigma_i^2)$ for $i=1,2$, let $\tau$ be the correlation coefficient between $X_1$ and $X_2$, and set $\lambda_2=1-\lambda_1$. Then  
\begin{equation*}
\lambda_1^*=\frac{\sigma_2^2-\tau\sigma_1\sigma_2}{\sigma_1^2+\sigma_2^2-2\tau\sigma_1\sigma_2}
\end{equation*}
is the minimizer and gives a diversified portfolio, given a suitable interior-point condition. 

Solving Problem \eqref{eq:opt1} requires knowledge of the distribution of $\mathbf X$, yet joint distribution models are notoriously difficult to estimate accurately in practice. A central challenge arises from uncertainty in the dependence structure among assets. In many realistic settings, the marginal distributions of individual risk factors are relatively well understood, whereas their joint distribution or copula remains highly uncertain. As a consequence, misspecification of dependence can fundamentally distort portfolio risk assessment and even reverse the classical benefits of diversification.

The collapse of Silicon Valley Bank provides a striking illustration of this phenomenon. Multiple risk drivers that appeared manageable in isolation became highly aligned under stressed market conditions, amplifying losses simultaneously on both the asset and liability sides of the balance sheet. Events of this kind highlight the importance of robust approaches to risk aggregation and portfolio selection that explicitly incorporate dependence uncertainty; {see \cite{EPR13}, \cite{EWW15}, \cite{MFE15} and \cite{BLLW20} on the need to account for dependence uncertainty.}

{Motivated by these considerations, we incorporate into portfolio selection an uncertainty set with fixed marginal distributions and a completely unknown dependence structure:}
\begin{equation}\label{eq:opt2}
	\inf_{\boldsymbol{\lambda} \in \Delta_n} \sup_{F_{\mathbf X} \in \mathcal{E}_n(\mathbf F)} \rho(\boldsymbol{\lambda}^\top \mathbf X),
\end{equation}
where  $\mathcal{E}_n(\mathbf F) = \{F_{\mathbf X}: X_i \sim F_i, \, i = 1, \dots, n\}$.
 If $\rho$ is the variance, a coherent or convex risk measure, or a negative expected concave utility functional, the inner problem $\sup_{F_{\mathbf X} \in \mathcal{E}_n(\mathbf F)} \rho(\boldsymbol{\lambda}^\top \mathbf X)$ can be analytically solved, and the worst-case dependence structure of $\mathbf X$ is comonotonic. That is, the optimization problem can be rewritten as
\begin{equation}\label{eq:opt3}
	\inf_{\boldsymbol{\lambda} \in \Delta_n} \rho(\boldsymbol{\lambda}^\top \mathbf Z), \quad \text{where } \mathbf Z = (F_1^{-1}(U), \dots, F_n^{-1}(U)) \text{ for some } U \sim \mathrm{U}[0, 1],
\end{equation}
where $F_i^{-1}$ is the quantile function of $F_i$.
In the case where all marginal distributions are from a location-scale family, the minimizer usually corresponds to a \emph{concentrated portfolio}. Indeed, in this case, the correlation between any two components in $\mathbf Z$ is 1. Assuming $\rho$ is the variance, we minimize $\Var(\boldsymbol{\lambda}^\top \mathbf Z) = \boldsymbol{\lambda}^\top (\mathbf a \mathbf a^\top) \boldsymbol{\lambda}$, where $\mathbf a$ is the vector of standard deviations of $\mathbf Z$. Consequently, a minimizer to \eqref{eq:opt3} is to invest solely in the asset with the smallest standard deviation.

In this paper, we explore results for Problems \eqref{eq:opt1} and \eqref{eq:opt2} under different risk functionals $\rho$ and dependence uncertainty. Our main focus is the impact of \emph{dependence uncertainty} on portfolio selection. Our results extend those of \cite{CLLW22}, who used a majorization-order-based approach to discuss diversification under dependence uncertainty with the same marginal distribution. We advance this direction by shifting the focus from asset distributions to the geometry of weights, using doubly stochastic matrices as the mathematical vehicle for averaging portfolio positions. {This approach allows us to investigate the conditions under which a move toward diversification (in the sense of majorization) can reduce risk.} We study these questions across a broad range of risk functionals, including Value-at-Risk ($\VaR$), Expected Shortfall ($\ES$), Range-Value-at-Risk (RVaR), standard deviation (SD), and other risk functionals satisfying certain properties, under several uncertainty regimes: a known distribution, completely unknown dependence structure, and uncertainty sets characterized by Wasserstein distance or moment information.

{In fact, the diversification effect in portfolio selection with respect to the majorization order has been studied extensively in the literature in the absence of model uncertainty. The diversification effect with respect to the majorization order was studied in \cite{IB05} for portfolios consisting of iid positive one-sided stable random variables and in \cite{IB09} for portfolios consisting of iid random variables following convolutions of symmetric stable distributions. More recently, this effect has been studied in a series of papers, including \cite{CEW25} and \cite{CHWZ25}, for portfolios consisting of iid (or negatively dependent) Pareto random variables with infinite mean or other random variables with infinite mean. The message of this literature is that diversification may increase portfolio risk when the underlying random variables have infinite means. We will show later that, in the presence of complete dependence uncertainty, the same phenomenon can arise even when the underlying random variables have finite means. 

There is a rich literature on portfolio selection using robust risk measures under model uncertainty. Portfolio selection for VaR, ES, and general distortion risk measures or distortion riskmetrics under moment uncertainty sets has been studied, e.g., in \cite{EOO03}, \cite{CHZ11}, \cite{LSWY18}, \cite{L18}, and \cite{PWW25}. Portfolio optimization for $\ES$ under Wasserstein uncertainty sets was studied in \cite{PW07} and \cite{EK18}. The mean-variance portfolio selection problem under Wasserstein uncertainty sets was investigated in \cite{BCZ22}, showing that this type of uncertainty leads to a regularization term in the mean-variance optimization problem. In \cite{PP18}, portfolio selection was studied under complete dependence uncertainty for a general risk functional satisfying subadditivity, comonotonic additivity, and positive homogeneity, giving rise to portfolio concentration. In our work, we study portfolio selection for non-convex risk functionals under a wide range of degrees of dependence uncertainty, with a particular focus on their consistency with the majorization order. } 

The paper makes four main contributions:
\begin{itemize}
    \item \textbf{A majorization-order-based consistency framework in the absence of model uncertainty.} We introduce a ``weak consistency'' notion for risk functionals with respect to the majorization order and show that weak consistency is equivalent to quasi-convexity in Theorem \ref{Th0}. This result clarifies the property needed for risk measures to support diversification (in a weak sense) with respect to the majorization order, complementing existing characterization results on convex-type risk measures.
    \item \textbf{Worst-case risk measure inequalities  and portfolio selection under complete dependence uncertainty.}  Under complete dependence uncertainty,  we establish inequalities with respect to the majorization order for the worst-case risk functionals satisfying subadditivity, and the worst-case  $\VaR$ and RVaR with marginal distributions possessing monotone densities in Theorem \ref{Th1}. It shows that the worst-case risk measure of the portfolios under dependence uncertainty is \emph{not} consistent with the majorization order for many commonly-used risk functionals.
     We further solve robust portfolio selection problems under dependence uncertainty for $\VaR$, RVaR, and standard deviation in Theorems \ref{prop1}-\ref{prop3}. Under broad conditions (including location-scale marginals or monotone-density classes), the optimal portfolio concentrates in a single asset, thereby generalizing the result for coherent risk measures. 
    
    \item \textbf{Robust portfolio selection under structured uncertainty.} We also develop the robust portfolio selection problem with uncertainty sets characterized by Wasserstein distance or moment information for distortion riskmetrics, and reduce the corresponding robust optimization problem to a tractable and deterministic optimization problem in Propositions \ref{th:wasserstein}-\ref{prop:moments}.
    \item \textbf{Robust aggregation with partial confidence.} {We propose a weighted robustness framework that interpolates between the reference and worst-case dependence structures. The resulting convex blend is structurally analogous to the constrained/unconstrained Expected Shortfall blend in the Fundamental Review of the Trading Book (\cite{BASEL19}), while remaining a distinct portfolio model.} 
\end{itemize}


To provide a clear overview of our findings, we summarize the relationship between uncertainty sets and the choice of risk measures, and the resulting optimal portfolio strategy in Table \ref{tab:summary}. {Under known dependence, the optimal strategy depends on expected losses, covariance, and the selected risk functional. Under complete dependence uncertainty, concentration becomes the robust choice for many widely-used risk functionals. Structured information, such as a Wasserstein ball \citep{PJ23,WLM26} or moment constraints \citep{EOO03,ZF09}, can introduce a penalty for concentration and thereby create a diversification incentive, but it does not guarantee an interior solution.}

\begin{table}[htbp]
\centering
\caption{Summary of Diversification vs. Concentration Effects}
\label{tab:summary}
\resizebox{\textwidth}{!}{%
\begin{tabular}{lll l}
\hline
\textbf{Uncertainty Regime} & \textbf{Risk Functionals} & \textbf{Optimal Portfolio} & \textbf{Key Driver} \\ \hline
Known Dependence & SD, $\ES$ (quasi-convexity) & {\textbf{Diversification/conditional}} & {
Exchangeability} \\ 
Full Ambiguity & $\ES$ (SA-PH-CA) & \textbf{Concentration} & Comonotonicity \\ 
Full Ambiguity & $\VaR, \text{RVaR}$ & \textbf{Concentration} & Monotone densities \\ 
Full Ambiguity & $\text{SD}$ & \textbf{Conditional} & Location-scale family \\ 
Wasserstein Ball & {Distortion riskmetrics} & {\textbf{Regularized/conditional}} & Norm penalty 
\\ 
Moment Constraints & {Distortion riskmetrics} & {\textbf{Regularized/conditional}} & Mean-variance trade-off \\ \hline
\end{tabular}}
\end{table}

{The practical message is twofold. First, diversification remains valuable, but its benefit relies on credible dependence modeling; when the copula is fully ambiguous, robust portfolio optimization often recommends concentration for many widely-used risk functionals unless additional information is incorporated. Second, interpolating between a reference dependence structure and the worst-case structure (and explicitly quantifying confidence in the joint model) provides a transparent way to balance diversification against robustness.}

The remainder of the paper is organized as follows. Section \ref{sec:notation} introduces the notation and basic definitions. In Section \ref{sec:known-dep}, we study risk measure inequalities under {a known dependence structure} and derive the necessary and sufficient condition for weak consistency. In Section \ref{sec:dep-uncertain}, we analyze worst-case risk measures under dependence uncertainty for $\VaR$, $\ES$, and RVaR, and establish the corresponding portfolio selection results, including conditions under which concentration is optimal in Section \ref{sec:portfolio}. We further extend the analysis to Wasserstein and moment-based uncertainty sets in Section \ref{sec:structured}. {Numerical illustrations of the main results are presented} in Section \ref{sec:numerical}. As a financial application, we introduce the weighted robustness formulation in Section \ref{sec:applications} and discuss its implications for diversification under uncertainty. 
Finally, Section \ref{sec:conclusion} concludes the paper.

\section{Notation and Preliminaries}\label{sec:notation}

Let $(\Omega,\mathcal F, \p)$ be an atomless probability space. {Let $\X$ be a linear subspace of $L^1(\Omega,\mathcal F,\p)$ that contains $L^\infty(\Omega,\mathcal F,\p)$. All expectations and risk-functional values below are assumed finite; results involving variances or covariances additionally impose the stated second-moment conditions.} For $X\in\X$, we use $F_X$ to denote its distribution function under $\p$. {We next introduce the risk measures used throughout the paper.} For a random variable $X\in\X$, we define its left quantile (also called Value-at-Risk ($\VaR$)) at level $\alpha \in (0,1]$ as
\begin{equation*}
    F_X^{-1}(\alpha) = \VaR_\alpha(X) = \inf\{x : F_X(x) \geq \alpha\},
\end{equation*}
and its right quantile at level $\alpha \in [0,1)$ as 
\begin{equation*}
    F_{X}^{-1,+}(\alpha) = \VaR^+_\alpha(X) = \inf\{x : F_X(x) > \alpha\}. 
\end{equation*}
For $\alpha \in [0,1)$, the Expected Shortfall ($\ES$) is defined as
\begin{equation*}
    \ES_{\alpha}(F_X) = \ES_{\alpha}(X) = \frac{1}{1-\alpha} \int_{\alpha}^{1} F_X^{-1}(t) \, \mathrm{d}t.
\end{equation*}
Note that $\VaR$ and $\ES$ are two popular regulatory risk measures widely applied in finance, insurance, economics and operations research. 
 The Range-Value-at-Risk (RVaR) proposed by \cite{CONT10} is a family of two-parameter risk measures bridging $\VaR$ and $\ES$, {defined, for $0 \leq \beta < \beta + \alpha \leq 1$, by}
\begin{equation*}
    R_{\beta,\alpha}(F_X) = R_{\beta,\alpha}(X) = \frac{1}{\alpha} \int_{\beta}^{\beta+\alpha} F_X^{-1,+}(1-t) \, \mathrm{d}t = \frac{1}{\alpha} \int_{1-\beta-\alpha}^{1-\beta} F_X^{-1}(t) \, \mathrm{d}t.
\end{equation*}
One can easily check that $\VaR$ and $\ES$ are the limiting cases of RVaR as follows.  For $\alpha \in (0,1)$, we have
\begin{equation*}
    \ES_{\alpha}(X) = R_{0,\alpha}(X), \quad \VaR_{\alpha}(X) = \lim_{\beta \downarrow 0} R_{1-\alpha, \beta}(X), \quad \text{and} \quad \VaR_{\alpha}^+(X) = \lim_{\beta \downarrow 0} R_{1-\alpha-\beta, \beta}(X).
\end{equation*}
Next, we introduce the concept of majorization to quantify the degree of diversification. For two vectors $\pmb{\lambda}, \pmb{\beta} \in \mathbb{R}^n$, we say $\pmb{\beta}$ is dominated by $\pmb{\lambda}$ in the \emph{majorization order}, denoted by $\pmb{\beta} \preceq \pmb{\lambda}$, if $\sum_{i=1}^{n} \phi(\beta_i) \leq \sum_{i=1}^{n} \phi(\lambda_i)$ for any continuous convex function $\phi$. A central characterization of this order is that $\pmb{\beta} \preceq \pmb{\lambda}$ if and only if there exists a \emph{doubly stochastic matrix} $\Lambda \in \mathcal{Q}_n$ such that $\pmb{\beta} = \Lambda \pmb{\lambda}$, where $\mathcal{Q}_n$ denotes the set of all $n \times n$ matrices with non-negative entries such that the sum of each row and each column equals one. 

Examples of doubly stochastic matrices include the {uniform averaging matrix} $\frac{1}{n} \mathbf{1}_n \mathbf{1}_n^\top$ (where $\mathbf{1}_n$ is the $n$-dimensional vector of ones) and any {$n\times n$ permutation matrix} $\Pi_k$. Specifically, for a fixed vector $\pmb{\lambda}$, $\Pi_k \pmb{\lambda}$ represents a permutation of its components, and $\Pi_1, \dots, \Pi_{n!}$ denote all possible permutation matrices. Conceptually, the operation $\Lambda \pmb{\lambda}$ represents a process of {``averaging,'' ``smoothing,'' or ``permuting''} the vector $\pmb{\lambda}$ (see Section 1.A.3 of \cite{MOA11}). We therefore use the majorization order as a mathematical proxy for the degree of diversification of a portfolio.

For two portfolio weight vectors $\pmb{\lambda}, \pmb{\beta} \in \mathbb{R}^n_+$, we say $\pmb{\beta}$ is more \emph{diversified} than $\pmb{\lambda}$ if $\pmb{\beta} \preceq \pmb{\lambda}$. For $\pmb{\lambda} \in \mathbb{R}^n_+$, a vector of random variables $\mathbf{X} \in \X^n$ and a risk functional {$\rho:\X\to\R$}, we define the portfolio risk functional as:
\begin{equation}
    S_\rho(\pmb{\lambda}; \mathbf{X}) = \rho\left( \pmb{\lambda}^\top \mathbf{X} \right).
\end{equation}
For simplicity, we write $S_\rho(\pmb{\lambda}) = S_\rho(\pmb{\lambda}; \mathbf{X})$. The primary objective of this paper is to investigate the conditions under which a more diversified portfolio induces a lower risk; that is, we seek to determine when $\pmb{\beta} \preceq \pmb{\lambda}$ implies $S_\rho(\pmb{\beta}) \leq S_\rho(\pmb{\lambda})$.  {Consistency with respect to the majorization order is commonly referred to as \textit{Schur-convexity} in the literature. However, the results developed below mainly concern a weaker form of monotonicity with respect to the majorization order. To distinguish our weaker property from Schur-convexity, we therefore do not use the term ``Schur-convexity'' in this paper.}



\section{Risk Inequalities with {Fixed} Dependence Structures}
\label{sec:known-dep}
In this section, we investigate whether a move toward a more diversified portfolio (quantified by the majorization order of the weights $\pmb{\lambda}$) necessarily results in a reduction of the portfolio risk $S_\rho(\pmb{\lambda})$.  We recall that a risk functional $\rho: \X \to \mathbb{R}$ is \emph{quasi-convex} if 
$\rho(\lambda X + (1-\lambda) Y) \leq \max\{\rho(X), \rho(Y)\}$ for all $X, Y \in \X$ and $\lambda \in [0,1]$.

Ideally, a risk manager would hope for \emph{strong consistency}:  a more diversified portfolio (in the sense of majorization order) leads to a lower, or at least not higher,  risk level, regardless of the underlying assets. However, as we show in Proposition \ref{prop0}, such a requirement is overly restrictive and leads to degenerate risk measures that are  not desirable in practical financial analysis.

\begin{proposition}[Strong Consistency]\label{prop0}
\begin{enumerate}
\item[(i)] $\rho \equiv c$ for some $c \in \mathbb{R}$ if and only if $\pmb{\beta} \preceq \pmb{\lambda}$ implies $S_\rho(\pmb{\beta}) \leq S_\rho(\pmb{\lambda})$ for all $\pmb{\lambda}, \pmb{\beta} \in \mathbb{R}_+^n$ and $X_1, \dots, X_n \in \X$;
\item[(ii)] $\rho(X) = f(\mathbb{E}[X])$ for some function $f: \mathbb{R} \to \mathbb{R}$ if and only if $\pmb{\beta} \preceq \pmb{\lambda}$ implies $S_\rho(\pmb{\beta}) \leq S_\rho(\pmb{\lambda})$ for all $\pmb{\lambda}, \pmb{\beta} \in \mathbb{R}_+^n$ and $X_1, \dots, X_n \in \X$ with the same mean.
\end{enumerate}
\end{proposition}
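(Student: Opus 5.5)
For both parts the "only if" direction is immediate. If $\rho\equiv c$, then $S_\rho(\pmb\beta)=c=S_\rho(\pmb\lambda)$. If $\rho(X)=f(\E[X])$ and all $X_i$ share a mean $\mu$, then $\pmb\beta\preceq\pmb\lambda$ forces $\sum_i\beta_i=\sum_i\lambda_i$. Hence
\[
\E[\pmb\beta^\top\mathbf X]=\mu\sum_i\beta_i=\mu\sum_i\lambda_i=\E[\pmb\lambda^\top\mathbf X],
\]
and so $S_\rho(\pmb\beta)=S_\rho(\pmb\lambda)$.

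The substance is the "if" direction. The plan is to use permutations, which are doubly stochastic, to obtain equality in both directions. Take $n\ge 2$ and the permutation $\Pi$ swapping coordinates $1$ and $2$. Then $\Pi\pmb\lambda\preceq\pmb\lambda$ and $\pmb\lambda=\Pi(\Pi\pmb\lambda)\preceq\Pi\pmb\lambda$. Applying the consistency hypothesis twice gives $S_\rho(\Pi\pmb\lambda)=S_\rho(\pmb\lambda)$ for every admissible $\mathbf X$.

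For (i), fix arbitrary $X,Y\in\X$. Put $X_1=X$, $X_2=Y$ and $X_i=0$ for $i\ge3$ (we may take any elements of $\X$), and choose $\pmb\lambda=(1,0,\dots,0)$. Then $S_\rho(\pmb\lambda)=\rho(X)$ and $S_\rho(\Pi\pmb\lambda)=\rho(Y)$, so $\rho(X)=\rho(Y)$ and $\rho$ is constant.

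For (ii), the same construction yields $\rho(X)=\rho(Y)$ whenever $\E[X]=\E[Y]$, since then $X_1=X$, $X_2=Y$ share a mean. The remaining $X_i$ can be set equal to the constant $\E[X]\in L^\infty\subseteq\X$ so that all $n$ variables have the same mean. Consequently $\rho(X)$ depends on $X$ only through $\E[X]$. Define $f(m)=\rho(m)$, using the constant random variable $m$. Then $\rho(X)=\rho(\E[X])=f(\E[X])$, because $X$ and the constant $\E[X]$ share a mean.

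The main obstacle is minor: making sure the admissible class in the hypothesis is rich enough. We need $n\ge2$ (the case $n=1$ is degenerate, and we assume $n\ge2$ as the context implicitly does), and we need to pad the remaining coordinates with constants lying in $\X$ that have the prescribed common mean. No convexity arguments are needed; averaging matrices play no role beyond the permutation trick.
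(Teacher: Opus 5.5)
Your proof is correct and follows the paper's own argument. Both use the fact that permutations of $(1,0,\dots,0)$ are majorization-equivalent to it, so the consistency hypothesis forces $\rho(X_i)=\rho(X_j)$ for arbitrary (respectively, equal-mean) inputs. Padding the remaining coordinates with constants from $L^\infty\subseteq\X$, defining $f(m)=\rho(m)$, and noting that $n\ge 2$ is needed just make explicit details the paper leaves tacit.
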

\begin{proof}  Note that the ``only if" parts for both (i) and (ii) are obvious. We next focus on the ``if" parts. Let $\pmb{\beta} = (1, 0, \dots, 0)$. Note that $\pmb{\beta} \preceq \Pi_k \pmb{\beta}$ and $\Pi_k \pmb{\beta} \preceq \pmb{\beta}$. Hence, $S_\rho(\pmb{\beta}) = S_\rho(\Pi_k \pmb{\beta})$ holds true for all $k=1, \dots, n!$, implying $\rho(X_1) = \dots = \rho(X_n)$ for all $X_1, \dots, X_n \in \X$ in (i), and for all $X_1, \dots, X_n \in \X$ with the same mean in (ii). Hence $\rho$ is a constant over $\X$ in (i) and a function of the mean in (ii).
\end{proof}

Proposition \ref{prop0} reveals a fundamental fact: {if we require a more diversified portfolio to have no greater risk for arbitrary underlying assets with the same mean}, then the adopted risk measure ignores the variance and other higher moments of the assets entirely, boiling down to a function of the mean. Such a requirement is too restrictive for real-world applications, where the primary goal of risk management is to capture the uncertainty and volatility of losses.

To obtain a more meaningful  class of risk measures, we relax our requirements and introduce the notion of weak consistency.  We say $\rho:\X\to\R$ is \emph{weakly consistent} if $\pmb{\beta} \preceq \pmb{\lambda}$ implies $S_\rho(\pmb{\beta}) \leq \max_{k=1, \dots, n!} S_\rho(\Pi_k \pmb{\lambda})$ for all $\pmb{\lambda}, \pmb{\beta} \in \mathbb{R}_+^n$ and $X_1, \dots, X_n \in \X$. Note that for $k=1,\dots,n!$, $\Pi_k \pmb{\lambda}$ and $\pmb{\lambda}$ are equivalent in the majorization order, i.e., $\Pi_k \pmb{\lambda}\sim \pmb{\lambda}$. 
Instead of requiring that a more diversified portfolio yields lower risk, as in strong consistency, weak consistency only requires that a more diversified portfolio {has no greater risk} than the portfolio corresponding to the \emph{worst-case permutation} of the original weights. {This relaxation reflects the fact that}, when assets are heterogeneous, the ordering of assets relative to weights matters.

\begin{theorem}[Weak Consistency]\label{Th0} A risk measure
 $\rho$ is  weakly consistent if and only if it is  quasi-convex.
\end{theorem}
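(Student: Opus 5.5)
Both directions go through the Birkhoff--von Neumann decomposition of doubly stochastic matrices, combined with the finite-convex-combination form of quasi-convexity.

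\emph{Sufficiency (quasi-convex $\Rightarrow$ weakly consistent).} Let $\pmb{\beta}\preceq\pmb{\lambda}$. By the characterization in Section~\ref{sec:notation}, $\pmb{\beta}=\Lambda\pmb{\lambda}$ for some $\Lambda\in\Q_n$. Birkhoff's theorem writes $\Lambda=\sum_{k=1}^{n!}w_k\Pi_k$ with $w_k\ge 0$ and $\sum_k w_k=1$. Linearity of $\pmb{\lambda}\mapsto\pmb{\lambda}^\top\mathbf X$ then gives
\[
\pmb{\beta}^\top\mathbf X=\sum_{k=1}^{n!} w_k\,(\Pi_k\pmb{\lambda})^\top\mathbf X ,
\]
a convex combination of the random variables $Y_k=(\Pi_k\pmb{\lambda})^\top\mathbf X\in\X$. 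Two-point quasi-convexity extends to finite convex combinations by induction: write $\sum_k w_kY_k=w_1Y_1+(1-w_1)\sum_{k\ge2}\frac{w_k}{1-w_1}Y_k$, treating $w_1=1$ separately. Hence $\rho(\pmb{\beta}^\top\mathbf X)\le\max_k\rho(Y_k)=\max_k S_\rho(\Pi_k\pmb{\lambda})$, which is weak consistency.

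\emph{Necessity (weakly consistent $\Rightarrow$ quasi-convex).} Fix $X,Y\in\X$ and $t\in[0,1]$. Weak consistency holds for every choice of $X_1,\dots,X_n$, so I would specialize to the first two coordinates and set the rest to $0$. Concretely, take $X_1=X$, $X_2=Y$, $X_3=\dots=X_n=0$ (or simply $n=2$ if $n$ is free), and choose $\pmb{\lambda}=(1,0,\dots,0)$ and $\pmb{\beta}=(t,1-t,0,\dots,0)$. Then $\pmb{\beta}\preceq\pmb{\lambda}$, since $\pmb{\beta}=\Lambda\pmb{\lambda}$ with $\Lambda$ the doubly stochastic matrix whose top-left $2\times2$ block is $\begin{pmatrix}t&1-t\\1-t&t\end{pmatrix}$ and which is the identity elsewhere. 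Every permutation $\Pi_k\pmb{\lambda}$ is a unit vector $e_j$, so $S_\rho(\Pi_k\pmb{\lambda})\in\{\rho(X),\rho(Y),\rho(0)\}$. Weak consistency therefore yields
\[
\rho(tX+(1-t)Y)\le\max\{\rho(X),\rho(Y),\rho(0)\}.
\]

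The main obstacle is the spurious $\rho(0)$ term that appears when $n\ge3$. I would first remove it by placing copies of $X$ and $Y$ in the remaining coordinates instead of $0$, e.g.\ $X_j=X$ for $j\ge3$. The unit-vector portfolios then only produce $\rho(X)$ or $\rho(Y)$, while $\pmb{\beta}^\top\mathbf X=tX+(1-t)Y$ is unchanged because $\beta_j=0$ for $j\ge3$. This gives exactly $\rho(tX+(1-t)Y)\le\max\{\rho(X),\rho(Y)\}$, i.e.\ quasi-convexity. Taking $\mathbb R_+^n$ weights rather than simplex weights causes no issue, since all weights used lie in the simplex.
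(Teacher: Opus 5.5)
Your proposal is correct and follows the paper's proof essentially step for step. Sufficiency uses the Birkhoff decomposition $\pmb{\beta}=\sum_k w_k\Pi_k\pmb{\lambda}$ together with quasi-convexity. Necessity uses the choice $\pmb{\lambda}=(1,0,\dots,0)$, $\pmb{\beta}=(t,1-t,0,\dots,0)$, with the extra coordinates set to copies of $X$, just as the paper sets $X_3=\dots=X_n=X_1$. Your explicit induction extending two-point quasi-convexity to finite convex combinations fills in a step the paper uses tacitly.
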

\begin{proof} We first focus on the``if'' part. Note that $\pmb{\beta} \preceq \pmb{\lambda}$ is equivalent to $\pmb{\beta} = \Lambda \pmb{\lambda}$ for some $\Lambda \in \mathcal{Q}_n$. By Theorem 2.A.2 of \cite{MOA11}, for any $\Lambda \in \mathcal{Q}_n$, there exists $(w_1, \dots, w_{n!}) \in \Delta_{n!}$ such that $\Lambda = \sum_{k=1}^{n!} w_k \Pi_k$. Hence, 
\begin{align}\label{eq:1}
\pmb{\beta} = \sum_{k=1}^{n!} w_k \Pi_k \pmb{\lambda}.
\end{align}
We write $\Pi_k \pmb{\lambda} = (\lambda_{k_1}, \dots, \lambda_{k_n})$. Consequently,
\begin{align*}
S_{\rho}(\pmb{\beta}) = \rho\left(\sum_{i=1}^{n} \beta_i X_i\right) = \rho\left(\sum_{i=1}^{n} \sum_{k=1}^{n!} w_k \lambda_{k_i} X_i\right) = \rho\left(\sum_{k=1}^{n!} w_k \sum_{i=1}^{n} \lambda_{k_i} X_i\right) \\
\leq \max_{k=1, \dots, n!} \rho\left(\sum_{i=1}^{n} \lambda_{k_i} X_i\right) = \max_{k=1, \dots, n!} S_\rho(\Pi_k \pmb{\lambda}).
\end{align*}
Next we show the ``only if'' part.  Let $\pmb{\beta} = (\beta, 1-\beta, 0, \dots, 0)$ for some $\beta \in [0,1]$ and $\pmb{\lambda} = (1, 0, \dots, 0)$. Then by definition, we have $\pmb{\beta} \preceq \pmb{\lambda}$. It follows from weak consistency that $\rho(\beta X_1 + (1-\beta) X_2) = S_\rho(\pmb{\beta}) \leq \max_{k=1, \dots, n!} S_\rho(\Pi_k \pmb{\lambda}) = \max_{i=1, \dots, n} \rho(X_i)$. By setting $X_3 = X_4 = \dots = X_n = X_1$, we obtain $\rho(\beta X_1 + (1-\beta) X_2) \leq \max \{ \rho(X_1), \rho(X_2) \}$, which implies that $\rho$ is quasi-convex.
\end{proof}

Theorem \ref{Th0} provides a powerful insight: quasi-convexity (a much broader property than convexity) is the exact mathematical requirement  for the risk measures to support diversification in a weak sense. Economically, this means that as long as our risk measure does not exhibit erratic non-monotonicity under mixing, diversification will not lead to a risk level exceeding that of the portfolio with the worst-case assignment of the original weights.

To recover a stronger result where $\pmb{\beta} \preceq \pmb{\lambda}$ implies $S_\rho(\pmb{\beta}) \leq S_\rho(\pmb{\lambda})$ without falling into the degenerate cases described in Proposition \ref{prop0}, it is necessary to impose some specific structure on {the assets' negative returns}. The most natural assumption in this context is exchangeability. We say $\mathbf{X} \in (\mathcal{X})^n$ is {\emph{exchangeable}} if $\Pi_k\mathbf X\overset{d}{=}\mathbf{X}$ for all $k=1,\dots,n!$. {Exchangeability means that the asset negative-return vector is statistically identical under permutation of its components.} We say a risk functional $\rho:\X\to\R$ is \emph{law-invariant} if $\rho(X)=\rho(Y)$ whenever $X\overset{d}{=}Y$ for $X, Y\in\X$, where $\overset{d}{=}$ means equality in law. 

\begin{corollary}\label{cor2}
Suppose $\mathbf{X} \in \X^n$ is exchangeable and $\rho$ is quasi-convex and law-invariant. Then $\pmb{\beta} \preceq \pmb{\lambda}$ implies $S_{\rho}(\pmb{\beta}) \leq S_{\rho}(\pmb{\lambda})$ for all $\pmb{\lambda}, \pmb{\beta} \in \mathbb{R}_+^n$.
\end{corollary}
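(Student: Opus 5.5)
The plan is to derive this directly from Theorem \ref{Th0}, using exchangeability and law-invariance to show that the maximum over permutations on the right-hand side of weak consistency collapses to a single value.

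Since $\rho$ is quasi-convex, Theorem \ref{Th0} says $\rho$ is weakly consistent. Hence $\pmb{\beta} \preceq \pmb{\lambda}$ gives
\[
S_\rho(\pmb{\beta}) \leq \max_{k=1,\dots,n!} S_\rho(\Pi_k \pmb{\lambda}).
\]
It therefore suffices to show $S_\rho(\Pi_k\pmb{\lambda}) = S_\rho(\pmb{\lambda})$ for every $k$.

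To prove this, note that $(\Pi_k\pmb{\lambda})^\top \mathbf X = \pmb{\lambda}^\top \Pi_k^\top \mathbf X$. The matrix $\Pi_k^\top = \Pi_k^{-1}$ is itself one of the permutation matrices $\Pi_j$. By exchangeability, $\Pi_j \mathbf X \overset{d}{=} \mathbf X$. Applying the measurable map $\mathbf x \mapsto \pmb{\lambda}^\top \mathbf x$ to both sides gives
\[
(\Pi_k\pmb{\lambda})^\top \mathbf X \overset{d}{=} \pmb{\lambda}^\top \mathbf X.
\]
Law-invariance of $\rho$ then yields $S_\rho(\Pi_k\pmb{\lambda}) = S_\rho(\pmb{\lambda})$. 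The maximum equals $S_\rho(\pmb{\lambda})$, and the claim follows.

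There is no real obstacle. The only point needing care is the transpose: we must check that weights permuted by $\Pi_k$ correspond to assets permuted by $\Pi_k^\top$, which is again a permutation matrix and so is covered by the exchangeability assumption. We also need the portfolio sums to lie in $\X$, which holds because $\X$ is a linear space.
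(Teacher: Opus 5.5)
Your proof is correct. It is the argument the paper leaves implicit when it states this result as a direct corollary of Theorem \ref{Th0}: weak consistency bounds $S_\rho(\pmb{\beta})$ by $\max_k S_\rho(\Pi_k\pmb{\lambda})$, and exchangeability with law-invariance makes every $S_\rho(\Pi_k\pmb{\lambda})$ equal to $S_\rho(\pmb{\lambda})$.
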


It is worth noting that a similar conclusion to Corollary \ref{cor2} was reached in Proposition 7.1 of \cite{CLLW22} under the assumption that $\rho$ is consistent with the convex order and $\mathbf{X}$ is exchangeable. {Corollary \ref{cor2} shows that a different property, quasi-convexity, is sufficient to guarantee consistency when the asset negative returns are exchangeable. More importantly, Theorem \ref{Th0} shows that quasi-convexity is necessary and sufficient for weak consistency for all $\mathbf{X}\in\X^n$.}

However, in most practical portfolios,  asset returns are inherently heterogeneous, and exchangeability is an overly restrictive assumption. Since strict consistency across all possible assets is only possible for the degenerate functionals identified in Proposition \ref{prop0}, we seek an alternative approach that preserves the benefits of the majorization order without requiring exchangability. We achieve this by modifying the risk functional to explicitly account for asset heterogeneity.  For $\rho:\X\to\R$ and $\pmb{\lambda}\in \mathbb{R}_+^n$, define the maximum permutation risk as
$$
S_{\rho}^{\max}(\pmb{\lambda}) = \max_{k=1, \dots, n!} S_\rho(\Pi_k \pmb{\lambda}).
$$
By defining the risk of a portfolio as the maximum risk of the portfolio  across all possible permutations of its weights, we construct a functional that is inherently consistent with the majorization order regardless of the properties of the underlying random vectors.

\begin{proposition}
Suppose $X_1, \dots, X_n \in \X$ and $\rho$ is quasi-convex.
Then $\pmb{\beta} \preceq \pmb{\lambda}$ implies $S_{\rho}^{\max}(\pmb{\beta}) \leq S_{\rho}^{\max}(\pmb{\lambda})$ for all $\pmb{\lambda}, \pmb{\beta} \in \mathbb{R}_+^n$.
\end{proposition}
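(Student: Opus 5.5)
The plan is to reduce the claim to Theorem \ref{Th0} (weak consistency) applied to each permutation of $\pmb{\beta}$ separately. Since $S_\rho^{\max}(\pmb{\beta})$ is a maximum over permutations, it suffices to fix an arbitrary $j\in\{1,\dots,n!\}$ and show
\begin{equation*}
S_\rho(\Pi_j\pmb{\beta}) \leq \max_{k=1,\dots,n!} S_\rho(\Pi_k\pmb{\lambda}) = S_\rho^{\max}(\pmb{\lambda}).
\end{equation*}
Taking the maximum over $j$ then yields $S_\rho^{\max}(\pmb{\beta})\leq S_\rho^{\max}(\pmb{\lambda})$.

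The first step is to check that $\Pi_j\pmb{\beta}\preceq\pmb{\lambda}$. Since $\pmb{\beta}\preceq\pmb{\lambda}$, there is $\Lambda\in\mathcal{Q}_n$ with $\pmb{\beta}=\Lambda\pmb{\lambda}$. Hence $\Pi_j\pmb{\beta}=(\Pi_j\Lambda)\pmb{\lambda}$. The product of a permutation matrix and a doubly stochastic matrix is again doubly stochastic, because its entries are nonnegative and its row and column sums are preserved. So $\Pi_j\Lambda\in\mathcal{Q}_n$, and therefore $\Pi_j\pmb{\beta}\preceq\pmb{\lambda}$.

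Alternatively, this follows directly from the definition of the order: $\sum_i\phi((\Pi_j\pmb{\beta})_i)=\sum_i\phi(\beta_i)$ for every continuous convex $\phi$, since permuting the components does not change the sum.

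Also $\Pi_j\pmb{\beta}\in\mathbb{R}^n_+$. Since $\rho$ is quasi-convex, Theorem \ref{Th0} says it is weakly consistent. Applying weak consistency to the pair $\Pi_j\pmb{\beta}\preceq\pmb{\lambda}$, with the same $X_1,\dots,X_n$, gives exactly the displayed inequality. Equivalently, one can rerun the "if" argument of Theorem \ref{Th0}: write $\Pi_j\Lambda$ as a convex combination of permutation matrices via Birkhoff's theorem (Theorem 2.A.2 of \cite{MOA11}) and then use quasi-convexity of $\rho$.

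There is essentially no hard step. The only point requiring care is verifying that permuting a majorized vector keeps it majorized by $\pmb{\lambda}$, i.e.\ that $\mathcal{Q}_n$ is closed under left multiplication by permutation matrices. The rest is a direct invocation of Theorem \ref{Th0}.
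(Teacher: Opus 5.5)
Your proof is correct and takes essentially the same route as the paper. The paper states this proposition without proof, as an immediate consequence of Theorem \ref{Th0}, and your argument makes that explicit: each $\Pi_j\pmb{\beta}$ is still majorized by $\pmb{\lambda}$, so weak consistency gives $S_\rho(\Pi_j\pmb{\beta})\leq S_\rho^{\max}(\pmb{\lambda})$, and taking the maximum over $j$ finishes it.
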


\section{Risk Inequalities under Dependence Uncertainty}\label{sec:dep-uncertain}

 In the previous section, we established that, under a known dependence structure, quasi-convexity is {necessary and sufficient for weak consistency}. In practice, however, the estimation of the dependence structure is often inaccurate {because of limited data}, even when the marginal distributions can be estimated {with relatively high accuracy}. Moreover, in some cases, data from different correlated products are
separately collected so that no dependence information is available; see \cite{EPR13} and \cite{EWW15}. The misspecification of the dependence structure may result in severe consequences in risk management; see, e.g., \cite{MFE15}.
{Motivated by these considerations, we incorporate dependence uncertainty into the analysis of diversification and investigate how it influences diversification effects.}

We introduce some properties of risk functionals that will be used later. A risk functional $\rho: \X \to \mathbb{R}$ is \emph{subadditive (SA)} if $\rho(X+Y) \leq \rho(X) + \rho(Y)$ for all $X, Y \in \X$; it is \emph{positively homogeneous (PH)} if $\rho(\lambda X) = \lambda \rho(X)$ for all $X \in \X$ and $\lambda \geq 0$; and it is \emph{comonotonically additive (CA)} if $\rho(X+Y) = \rho(X) + \rho(Y)$ for all comonotonic $X, Y \in \X$\footnote{We say $X$ and $Y$ are comonotonic if there exist a random variable $Z$ and two {nondecreasing} functions $f$ and $g$ such that $X=f(Z)$ and $Y=g(Z)$ almost surely; see, e.g., Chapter 4 of \cite{FS16}.}. 

{Next, for $\mathbf F=(F_1,\dots,F_n)$, define the uncertainty set by}
$$ \mathcal{E}_n(\mathbf F)=\{F_{\mathbf X}: X_i\sim F_i,~i=1,\dots,n\}.$$
Under dependence uncertainty, we are concerned with the worst-case risk {over $\mathcal E_n(\mathbf F)$, defined by}
\begin{equation*}
    \overline{\rho}(\mathbf F) = \sup_{F_{\mathbf X}\in \mathcal{E}_n(\mathbf F)} \rho(X_1+\dots+X_n).
\end{equation*}
{Whenever $\overline\rho$ is expressed solely in terms of marginal distributions, we additionally assume that $\rho$ is law-invariant.} 
For $\lambda \geq 0$, let $F^\lambda$ denote the distribution of $\lambda X$ for $X \sim F$, and let $\mathbf{F}^{\pmb{\lambda}} = (F_1^{\lambda_1}, \dots, F_n^{\lambda_n})$ be the vector of scaled marginals corresponding to portfolio weights $\pmb{\lambda}$. Furthermore, let $\mathcal{M}_D^{\alpha}$ (respectively, $\mathcal{M}_I^{\alpha}$) denote the set of univariate distributions with decreasing (respectively, increasing) density beyond their $\alpha$-quantile.
\begin{remark}[On the Tail Behavior of Marginal Distributions]
The sets $\mathcal{M}_D^{\alpha}$ and $\mathcal{M}_I^{\alpha}$ are introduced to characterize the monotonicity of the probability density function (pdf) in the tail part of the distribution. Specifically, $\mathbf{F} \in (\mathcal{M}_D^{\alpha})^n$ implies that {the density of each $F_i$} is non-increasing for all $x \geq F_i^{-1}(\alpha)$. It is worth mentioning that the class $\mathcal{M}_D^{\alpha}$ encompasses the vast majority of risk factors encountered in finance and insurance. Most loss distributions are right-skewed with tails that decay toward zero; for instance, the normal, Student-$t$, {lognormal}, exponential, and Pareto distributions all exhibit decreasing densities beyond a certain quantile. 
Conversely, $\mathcal{M}_I^{\alpha}$ describes distributions where the probability mass clusters near the upper end of the support. Although such distributions are less common for individual asset losses, they may arise in certain insurance contracts with hard caps or liabilities with a high probability of total default. Mathematically, these conditions are crucial for the derivation of sharp convolution bounds in robust risk aggregation and {ensure} that the robust risk measures behave consistently under the majorization order; see \cite{CLLW22} and \cite{BLLW20} for more details. 
\end{remark}

The following theorem characterizes the behavior of worst-case risk measures under diversification.

\begin{theorem}\label{Th1} 
Suppose $\pmb{\lambda}, \pmb{\beta} \in \mathbb{R}_+^n$ and $\pmb{\beta} \preceq \pmb{\lambda}$.
\begin{enumerate}
\item[(i)] If $\rho$ is {law-invariant,} SA, CA, and PH, then $\min_{k=1,\dots, n!} \overline{\rho}(\mathbf{F}^{\Pi_k \pmb{\lambda}}) 
\leq \overline{\rho}(\mathbf{F}^{\pmb{\beta}})$;
\item[(ii)] If $\rho = \VaR_{\alpha}$ with $\alpha \in (0,1)$ and {$\mathbf{F} \in 
(\mathcal{M}_D^{\alpha})^n \cup (\mathcal{M}_I^{\alpha})^n$ or $n=2$}, then 
$\min_{k=1,\dots, n!} \overline{\VaR}_{\alpha}(\mathbf{F}^{\Pi_k \pmb{\lambda}}) 
\leq \overline{\VaR}_{\alpha}(\mathbf{F}^{\pmb{\beta}})$;
\item[(iii)] If $\rho = R_{\beta,\alpha}$ with $0 \leq \beta < \beta + \alpha \leq 1$ 
and $\mathbf{F} \in (\mathcal{M}_D^{1-\alpha-\beta})^n$, then 
$\min_{k=1,\dots, n!} \overline{R}_{\beta,\alpha}(\mathbf{F}^{\Pi_k \pmb{\lambda}}) 
\leq \overline{R}_{\beta,\alpha}(\mathbf{F}^{\pmb{\beta}})$.
\end{enumerate}
\end{theorem}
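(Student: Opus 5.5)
The common idea is that under complete dependence uncertainty the worst-case functional becomes concave in the weight vector (or at least quasi-concave along mixtures of permutations). That reverses the quasi-convexity argument of Theorem \ref{Th0}. I would write $\pmb{\beta}=\sum_{k=1}^{n!} w_k \Pi_k\pmb{\lambda}$ with $(w_k)\in\Delta_{n!}$, using Birkhoff's theorem (Theorem 2.A.2 of \cite{MOA11}) exactly as in \eqref{eq:1}. It then suffices to show that
\[
\pmb{\gamma}\mapsto \overline{\rho}(\mathbf F^{\pmb{\gamma}})
\]
is concave, or merely quasi-concave, on $\mathbb{R}_+^n$. Then $\overline{\rho}(\mathbf F^{\pmb{\beta}})\geq \min_k \overline{\rho}(\mathbf F^{\Pi_k\pmb{\lambda}})$ follows at once.

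For (i), I would use the standard fact that for law-invariant SA and CA $\rho$ the worst case over $\mathcal E_n(\mathbf F)$ is attained by the comonotonic coupling. Subadditivity gives $\rho(\sum_i X_i)\le \sum_i\rho(X_i)$, and comonotonic additivity shows this bound is attained. With PH this yields $\overline{\rho}(\mathbf F^{\pmb{\gamma}})=\sum_i \gamma_i \rho(X_i)$, which is linear in $\pmb{\gamma}$. Linearity gives
\[
\overline{\rho}(\mathbf F^{\pmb{\beta}})=\sum_k w_k\overline{\rho}(\mathbf F^{\Pi_k\pmb{\lambda}})\ge\min_k\overline{\rho}(\mathbf F^{\Pi_k\pmb{\lambda}}).
\]
Positive homogeneity is applied only for nonnegative scalars, as $\gamma_i\ge0$ requires.

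For (ii) and (iii) I would go through RVaR, since VaR is its limit; (ii) in the decreasing-density case follows from (iii) by letting the range shrink. The tool is the known sharp bound for worst-case RVaR with marginals in $\mathcal M_D^{1-\alpha-\beta}$ (\cite{BLLW20}, via convolution bounds and the results of \cite{CLLW22}). It expresses $\overline R_{\beta,\alpha}(\mathbf F^{\pmb\gamma})$ as an infimum over splits $\beta_0+\dots+\beta_n=\beta$ of
\[
\sum_i R_{\beta_i,\beta_0+\alpha}(F_i^{\gamma_i})=\sum_i\gamma_i R_{\beta_i,\beta_0+\alpha}(F_i),
\]
the equality holding because scaling by $\gamma_i\ge0$ preserves the decreasing-density property. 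For each fixed split this expression is linear in $\pmb{\gamma}$. An infimum of linear functions is concave, so $\pmb\gamma\mapsto\overline R_{\beta,\alpha}(\mathbf F^{\pmb\gamma})$ is concave and the mixture argument finishes.

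In the $\mathcal M_I^\alpha$ case of (ii), worst-case VaR has a known formula from \cite{BLLW20}/\cite{CLLW22}: it equals the sum of the conditional means $\sum_i\gamma_i\E[X_i\mid X_i> F_i^{-1}(\alpha)]$. This is linear in $\pmb\gamma$, and the same argument applies. For $n=2$ I would use the Makarov/R\"uschendorf formula
\[
\overline{\VaR}_\alpha(F_1^{\gamma_1},F_2^{\gamma_2})=\inf_{t\in[\alpha,1]}\bigl(\gamma_1F_1^{-1,+}(t)+\gamma_2F_2^{-1,+}(1+\alpha-t)\bigr),
\]
possibly with left/right quantile adjustments. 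Again this is an infimum of functions linear in $\pmb\gamma$, hence concave.

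The main obstacle is justifying these representations exactly. One needs the infimum form of the RVaR bound to be \emph{sharp}, not merely an upper bound, under the monotone-density hypothesis, and the left- versus right-quantile conventions in the $n=2$ formula must be checked. A related point is the passage from RVaR to VaR in (ii): concavity of each approximating function is preserved under pointwise limits, so I would take the limit $\VaR_\alpha=\lim R_{1-\alpha,\cdot}$ at the level of the worst-case values, which is justified by monotonicity in the range parameter.
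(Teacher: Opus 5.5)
Your architecture matches the paper's: Birkhoff's decomposition $\pmb\beta=\sum_k w_k\Pi_k\pmb\lambda$, plus concavity of $\pmb\gamma\mapsto\overline\rho(\mathbf F^{\pmb\gamma})$, obtained by writing the worst case as an infimum of functions linear in $\pmb\gamma$. Parts (i) and (iii) go exactly as in the paper, which takes sharpness of the RVaR convolution bound from Proposition 4 of \cite{FHLX23}. The problem is (ii). The paper uses one tool for all three cases, Theorem 2 of \cite{BLLW20}: under $(\mathcal M_D^\alpha)^n\cup(\mathcal M_I^\alpha)^n$ or $n=2$, $\overline{\VaR}_\alpha(\mathbf F^{\pmb\gamma})=\inf_{\pmb\theta\in(1-\alpha)\Theta_n}\sum_i\gamma_iR_{\theta_i,\theta_0}(F_i)$, where $\pmb\theta$ is the split vector. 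You replace this with case-specific formulas, and two of them fail.

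In the increasing-density case, $\overline{\VaR}_\alpha(\mathbf F^{\pmb\gamma})=\sum_i\gamma_i\E[X_i\mid X_i>F_i^{-1}(\alpha)]$ is false in general. It is the value you would get if the tail parts could be coupled to have a constant sum (joint mixability), and increasing densities do not ensure that. For every coupling, $\p\bigl(X_1+X_2\le F_1^{-1}(\alpha)+\operatorname{ess\,sup}X_2\bigr)\ge\p\bigl(X_1\le F_1^{-1}(\alpha)\bigr)\ge\alpha$, so $\overline{\VaR}_\alpha(F_1,F_2)\le F_1^{-1}(\alpha)+\operatorname{ess\,sup}X_2$. Take $\alpha$-tails with increasing densities, with conditional distribution functions $(x/10)^2$ on $[0,10]$ and $x^2$ on $[0,1]$. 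The cap is then $1$, while your formula gives $20/3+2/3=22/3$. Even for identical marginals, the cap $F^{-1}(\alpha)+(n-1)F^{-1}(1)$ can lie strictly below $n\ES_\alpha(F)$.

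In the decreasing-density case, the limit you choose, $R_{1-\alpha,\epsilon}\uparrow\VaR_\alpha$, does commute with the worst case: it is a supremum of suprema, and pointwise limits of concave functions are concave. However, $R_{1-\alpha,\epsilon}$ averages quantiles over $[\alpha-\epsilon,\alpha]$. So (iii) as stated applies only if $\mathbf F\in(\mathcal M_D^{\alpha-\epsilon})^n$, which is strictly stronger than the hypothesis $\mathbf F\in(\mathcal M_D^{\alpha})^n$ of (ii), since it needs decreasing densities slightly below the $\alpha$-quantiles too. The approximation compatible with $\mathcal M_D^\alpha$ is $R_{1-\alpha-\epsilon,\epsilon}$, on $[\alpha,\alpha+\epsilon]$. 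But it decreases to $\VaR^+_\alpha$ rather than $\VaR_\alpha$, and it only yields $\overline{\VaR^+_\alpha}\le\inf_\epsilon\overline R_{1-\alpha-\epsilon,\epsilon}$; the sup--inf interchange you would need is not automatic.

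Both gaps disappear if you invoke the sharp VaR convolution bound directly, as the paper does. It is an infimum of functions linear in $\pmb\gamma$, so your concavity argument then goes through unchanged. Your $n=2$ route via the Makarov--R\"uschendorf formula is a legitimate alternative once the quantile conventions are fixed. One small correction: $R(F_i^{\gamma_i})=\gamma_iR(F_i)$ is just positive homogeneity of quantiles. What scaling's preservation of monotone densities buys you is the right to apply the sharpness result to $\mathbf F^{\pmb\gamma}$ at all. Coordinates with $\gamma_i=0$ give degenerate marginals and should be dropped first.
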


\begin{proof}
(i) Since $\rho$ is SA and PH, we have
$\overline{\rho}(\mathbf{F}^{\pmb{\beta}}) \leq \sum_{i=1}^n \rho(\beta_i X_i) =
\sum_{i=1}^n \beta_i \rho(X_i)$, where $X_i \sim F_i$. Note that  $(\beta_1 F_1^{-1}(U), \dots, \beta_n F_n^{-1}(U))$ is comonotonic, where $U\sim U[0,1]$.  Using the fact that $\rho$ is CA, we have $\rho(\beta_1
F_1^{-1}(U) + \dots + \beta_n F_n^{-1}(U)) = \sum_{i=1}^n \beta_i \rho(X_i)$. Consequently, we have $\overline{\rho}(\mathbf{F}^{\pmb{\beta}}) =
\sum_{i=1}^{n} \beta_i \rho(X_i)$.

Since $\pmb{\beta} \preceq \pmb{\lambda}$, by \eqref{eq:1} there exist weights
$(w_1, \dots, w_{n!}) \in \Delta_{n!}$ such that $\pmb{\beta} = \sum_{k=1}^{n!} w_k
\Pi_k \pmb{\lambda}$. This implies
\begin{align*}
\overline{\rho}(\mathbf{F}^{\pmb{\beta}}) = \sum_{i=1}^{n} \beta_i \rho(X_i)
= \sum_{k=1}^{n!} w_k \sum_{i=1}^{n} \lambda_{k_i} \rho(X_i)
= \sum_{k=1}^{n!} w_k \overline{\rho}(\mathbf{F}^{\Pi_k \pmb{\lambda}})
\geq \min_{k=1, \dots, n!} \overline{\rho}(\mathbf{F}^{\Pi_k \pmb{\lambda}}),
\end{align*}
where the last equality uses the fact that 
$\overline{\rho}(\mathbf{F}^{\Pi_k \pmb{\lambda}}) = \sum_{i=1}^n \lambda_{k_i}
\rho(X_i)$. This completes the proof of (i).

\noindent
(ii) 
For $\mathbf{F} \in (\mathcal{M}_D^{\alpha})^n \cup (\mathcal{M}_I^{\alpha})^n$ 
or $n=2$, in light of Theorem 2 of \cite{BLLW20}, we have
\begin{equation*}
    \overline{\VaR}_{\alpha}(\mathbf{F}) = \sup_{F_{\mathbf{X}} \in \mathcal{E}_n(\mathbf{F})} \VaR_{\alpha} \left( \sum_{i=1}^n X_i \right) = \inf_{\pmb{\gamma} \in (1-\alpha)\Theta_n} \sum_{i=1}^n R_{\gamma_i, \gamma_0}(X_i),
\end{equation*}
where $\Theta_n = \{ \pmb{\gamma} \in (0,1) \times [0,1)^n : \sum_{i=0}^{n} \gamma_i = 1 \}$ and $\pmb{\gamma} = (\gamma_0, \gamma_1, \dots, \gamma_n)$. 
Here, for all $\pmb{\beta} \in \mathbb{R}_+^n$ (including the case where some $\beta_i = 0$),
the convolution bound gives:
\begin{align*}
\overline{\VaR}_{\alpha}(\mathbf{F}^{\pmb{\beta}}) = \inf_{\pmb{\gamma} \in
(1-\alpha)\Theta_n} \sum_{i=1}^{n} \beta_i R_{\gamma_i, \gamma_0}(F_i).
\end{align*}
Substituting $\beta_i = \sum_{k=1}^{n!} w_k \lambda_{k_i}$ from \eqref{eq:1} in the above equation, we obtain
\begin{align*}
\overline{\VaR}_{\alpha}(\mathbf{F}^{\pmb{\beta}})
&= \inf_{\pmb{\gamma} \in (1-\alpha)\Theta_n} \sum_{k=1}^{n!} w_k
\sum_{i=1}^{n} \lambda_{k_i} R_{\gamma_i, \gamma_0}(F_i).
\end{align*}
Since the infimum of a convex combination is greater than or equal to the convex combination
of the corresponding individual infima, we have
\begin{align*}
\overline{\VaR}_{\alpha}(\mathbf{F}^{\pmb{\beta}})
&\geq \sum_{k=1}^{n!} w_k \inf_{\pmb{\gamma} \in (1-\alpha)\Theta_n}
\sum_{i=1}^{n} \lambda_{k_i} R_{\gamma_i, \gamma_0}(F_i) \\
&= \sum_{k=1}^{n!} w_k \overline{\VaR}_{\alpha}(\mathbf{F}^{\Pi_k \pmb{\lambda}}) 
\geq \min_{k=1, \dots, n!} \overline{\VaR}_{\alpha}(\mathbf{F}^{\Pi_k \pmb{\lambda}}).
\end{align*}

\noindent 
(iii) By Proposition 4 of \cite{FHLX23}, under $\mathbf{F} \in
(\mathcal{M}_D^{1-\alpha-\beta})^n$, the analogous convolution bound holds for
$\overline{R}_{\beta,\alpha}(\mathbf{F}^{\pmb{\beta}})$. The remainder of the argument
is identical to case (ii) with $\overline{\VaR}_\alpha$ replaced by
$\overline{R}_{\beta,\alpha}$. {The details of the proof are omitted.}
\end{proof}

The risk measures satisfying the conditions of Theorem \ref{Th1}(i) are of particular practical interest, as they include the Expected Shortfall, spectral risk measures, Gini-deviation, and mean-median deviation as important examples that are widely used in risk management, finance, and economics; {see, e.g.,} \cite{WWW20} and \cite{PWW25}.

Crucially, Theorem \ref{Th1} reveals a sharp contrast to the results in Section \ref{sec:known-dep}. It shows that the worst-case risk measure of the portfolios under dependence uncertainty is \emph{not} consistent with the majorization order for many commonly-used risk functionals. In fact, for $\pmb{\beta} \preceq \pmb{\lambda}$, there exists a permutation $\Pi_k$ such that the risk of the more concentrated portfolio corresponding to $\Pi_k\pmb{\lambda}$ is actually lower than or equal to the risk of the more diversified portfolio with weights $\pmb{\beta}$. This indicates that, under complete ambiguity of the copula, diversification may increase the robust risk for many widely-used risk functionals.

This effect becomes even more pronounced when the asset returns are identically distributed. If $\mathbf{F} = (F, \dots, F)$, then $\overline{\rho}(\mathbf{F}^{\Pi_k \pmb{\lambda}}) = \overline{\rho}(\mathbf{F}^{\pmb{\lambda}})$ for any permutation $\Pi_k$ and any $\rho$, leading to the following simplified results. 
\begin{proposition}
    \label{cor1} 
Suppose $\pmb{\lambda}, \pmb{\beta} \in \mathbb{R}_+^n$, $\mathbf{F} = (F, \dots, F)$ and $\pmb{\beta} \preceq \pmb{\lambda}$.
\begin{enumerate}
\item[(i)] If $\rho$ is law-invariant, SA, 
CA, 
and PH, 
then $\overline{\rho}(\mathbf F^{\pmb\lambda}) \leq \overline{\rho}(\mathbf F^{\pmb \beta})$;
\item[(ii)] If $\rho = \VaR_{\alpha}$ with $\alpha \in (0,1)$ and $F \in \mathcal{M}_D^{\alpha} \cup \mathcal{M}_I^{\alpha}$, then $\overline{\rho}(\mathbf F^{\pmb\lambda}) \leq \overline{\rho}(\mathbf F^{\pmb \beta})$;
\item[(iii)] If $\rho = R_{\beta, \alpha}$ with $0 \leq \beta < \beta + \alpha \leq 1$ and {$F \in \mathcal{M}_D^{1-\alpha-\beta}$}, then $\overline{\rho}(\mathbf F^{\pmb\lambda}) \leq \overline{\rho}(\mathbf F^{\pmb \beta})$.
\end{enumerate}
\end{proposition}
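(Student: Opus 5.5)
The plan is to deduce Proposition \ref{cor1} directly from Theorem \ref{Th1}. The one extra ingredient is permutation invariance of the worst-case functional when the marginals coincide. Fix $\mathbf F=(F,\dots,F)$ and a permutation $\Pi_k$. I will show that $\overline{\rho}(\mathbf F^{\Pi_k\pmb\lambda})=\overline{\rho}(\mathbf F^{\pmb\lambda})$ for any law-invariant $\rho$.

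The invariance step goes as follows. The vector $\mathbf F^{\Pi_k\pmb\lambda}$ is just a reordering of the scaled marginals $(F^{\lambda_1},\dots,F^{\lambda_n})$. Let $\sigma$ be the permutation associated with $\Pi_k$, and take any $\mathbf Y$ with $Y_i\sim F^{\lambda_{\sigma(i)}}$. Then the relabeled vector $(Y_{\sigma^{-1}(1)},\dots,Y_{\sigma^{-1}(n)})$ has marginals $F^{\lambda_1},\dots,F^{\lambda_n}$. It also has the same component sum. Hence this relabeling is a bijection between $\mathcal E_n(\mathbf F^{\Pi_k\pmb\lambda})$ and $\mathcal E_n(\mathbf F^{\pmb\lambda})$ that preserves the law of $\sum_i Y_i$. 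Since $\rho$ is law-invariant, the two suprema coincide, and so
\[
\min_{k=1,\dots,n!}\overline{\rho}(\mathbf F^{\Pi_k\pmb\lambda})=\overline{\rho}(\mathbf F^{\pmb\lambda}).
\]
This identity uses only the fact that the marginals are identical; it does not need the separate assumptions listed in (i)--(iii).

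Next I will check, case by case, that the hypotheses of Theorem \ref{Th1} hold.
\begin{itemize}
\item For (i), the assumptions are exactly those of Theorem \ref{Th1}(i).
\item For (ii), $F\in\mathcal M_D^\alpha\cup\mathcal M_I^\alpha$ gives $\mathbf F\in(\mathcal M_D^\alpha)^n\cup(\mathcal M_I^\alpha)^n$, because every component equals $F$. Also, $\VaR_\alpha$ and $R_{\beta,\alpha}$ are law-invariant, since they are defined through quantiles.
\item For (iii), $F\in\mathcal M_D^{1-\alpha-\beta}$ gives $\mathbf F\in(\mathcal M_D^{1-\alpha-\beta})^n$.
\end{itemize}
In each case, Theorem \ref{Th1} yields $\min_k\overline{\rho}(\mathbf F^{\Pi_k\pmb\lambda})\le\overline{\rho}(\mathbf F^{\pmb\beta})$. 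Substituting the identity above gives $\overline{\rho}(\mathbf F^{\pmb\lambda})\le\overline{\rho}(\mathbf F^{\pmb\beta})$.

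The only step needing care is the relabeling bijection. In particular, it must remain valid when some $\lambda_i=0$, so that $F^{0}$ is a point mass at $0$. Relabeling coordinates does not care whether a marginal is degenerate, so no issue arises there. One could alternatively avoid Theorem \ref{Th1} in case (i) and note directly that $\overline\rho(\mathbf F^{\pmb\beta})=\rho(X)\sum_i\beta_i=\overline\rho(\mathbf F^{\pmb\lambda})$, since majorization preserves the sum; this gives equality. The uniform route through Theorem \ref{Th1} is cleaner, however, so I will use it for all three cases.
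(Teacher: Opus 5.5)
Your proposal is correct and follows the paper's own route: the paper obtains this proposition directly from Theorem~\ref{Th1}, using the observation that for identical marginals $\overline{\rho}(\mathbf F^{\Pi_k\pmb\lambda})=\overline{\rho}(\mathbf F^{\pmb\lambda})$ for every permutation $\Pi_k$, which is exactly the relabeling identity you verify. Your side remark is also correct: in case (i) the inequality is in fact an equality, because majorization preserves $\sum_i\beta_i=\sum_i\lambda_i$.
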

Proposition \ref{cor1} yields a counterintuitive conclusion: under dependence uncertainty and for homogeneous assets, the most diversified portfolio (e.g., the equally weighted portfolio) is the \emph{riskiest} with respect to worst-case risk measures, whereas the most concentrated portfolio is the \emph{safest}. Moreover, part (ii) of Proposition \ref{cor1} is consistent with Proposition 7.1 of \cite{CLLW22}.

To synthesize these findings, we define a risk functional that identifies the best possible assignment of weights to assets under the worst-case dependence {as follows}:
\begin{equation}
    \overline{S}_\rho^{\min}(\pmb{\lambda}) = \min_{k=1, \dots, n!} \overline{\rho}(\mathbf{F}^{\Pi_{k} \pmb{\lambda}}).
\end{equation}
As shown in the following result, the order induced by this functional is the reverse of the majorization order.

\begin{proposition}
Suppose $\pmb{\lambda}, \pmb{\beta} \in \mathbb{R}_+^n$ and $\pmb{\beta} \preceq \pmb{\lambda}$.
\begin{enumerate}
\item[(i)] If $\rho$ is 
law-invariant, SA, 
CA, 
and PH, 
then $\overline{S}_\rho^{\min}(\pmb{\lambda}) \leq \overline{S}_\rho^{\min}(\pmb{\beta})$ for all $\mathbf{F} \in \mathcal{M}^n$;
\item[(ii)] If $\rho = \VaR_{\alpha}$ with $\alpha \in (0,1)$, then $\overline{S}_\rho^{\min}(\pmb{\lambda}) \leq \overline{S}_\rho^{\min}(\pmb{\beta})$ for all $\mathbf{F} \in (\mathcal{M}_D^{\alpha})^n \cup (\mathcal{M}_I^{\alpha})^n$;
\item[(iii)] If $\rho = R_{\beta, \alpha}$ with $0 \leq \beta < \beta + \alpha \leq 1$, then $\overline{S}_\rho^{\min}(\pmb{\lambda}) \leq \overline{S}_\rho^{\min}(\pmb{\beta})$ for all {$\mathbf{F} \in (\mathcal{M}_D^{1-\alpha-\beta})^n$}.
\end{enumerate}
\end{proposition}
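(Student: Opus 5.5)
The plan is to reduce the statement to Theorem \ref{Th1}, applied not to $\pmb{\lambda}$ and $\pmb{\beta}$ directly but to $\pmb{\lambda}$ and every permutation of $\pmb{\beta}$. Fix $\pmb{\beta} \preceq \pmb{\lambda}$. The key observation is that the majorization order is invariant under permutations: for any permutation matrix $\Pi_j$ we have $\Pi_j\pmb{\beta} \preceq \pmb{\beta} \preceq \pmb{\lambda}$. Concretely, if $\pmb{\beta}=\Lambda\pmb{\lambda}$ with $\Lambda\in\mathcal Q_n$, then $\Pi_j\pmb{\beta} = (\Pi_j\Lambda)\pmb{\lambda}$, and $\Pi_j\Lambda\in\mathcal Q_n$. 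Alternatively, $\sum_i\phi$ is symmetric, so the convex-function characterization gives the same conclusion directly.

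In each of cases (i)--(iii), I will apply the corresponding part of Theorem \ref{Th1} to the pair $(\pmb{\lambda}, \Pi_j\pmb{\beta})$. The hypotheses on $\rho$ and $\mathbf F$ in the present statement are exactly those of Theorem \ref{Th1}(i), (ii) and (iii), respectively. Moreover, Theorem \ref{Th1} only requires $\Pi_j\pmb{\beta}\in\mathbb R^n_+$ and $\Pi_j\pmb{\beta}\preceq\pmb{\lambda}$. For case (i), the class $\mathcal M^n$ is read as any marginals for which $\overline\rho$ is finite, since Theorem \ref{Th1}(i) imposes no marginal restriction. The theorem therefore yields, for every $j=1,\dots,n!$,
\begin{equation*}
\overline{S}_\rho^{\min}(\pmb{\lambda})=\min_{k=1,\dots,n!}\overline{\rho}(\mathbf F^{\Pi_k\pmb{\lambda}}) \leq \overline{\rho}(\mathbf F^{\Pi_j\pmb{\beta}}).
\end{equation*}
Taking the minimum over $j$ on the right-hand side gives $\overline{S}_\rho^{\min}(\pmb{\lambda})\leq \min_j \overline{\rho}(\mathbf F^{\Pi_j\pmb{\beta}}) = \overline{S}_\rho^{\min}(\pmb{\beta})$, which is the claim.

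The only step needing care is the permutation invariance $\Pi_j\pmb{\beta}\preceq\pmb{\lambda}$, together with checking that Theorem \ref{Th1} was stated for arbitrary $\pmb{\beta}\preceq\pmb{\lambda}$ in $\mathbb R^n_+$ with the marginal vector $\mathbf F$ held fixed. In particular, I must confirm that the marginal conditions there concern $\mathbf F$, not its permutation, and they do. I expect no real obstacle here: the whole proof is a two-line corollary of Theorem \ref{Th1}. The mild subtlety is the notation $\mathcal M^n$ in (i), which I will interpret as the set of all marginal vectors with finite worst-case risk.
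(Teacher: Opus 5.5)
Your proposal is correct and is essentially the intended argument. The paper states this proposition without a separate proof, as an immediate consequence of Theorem~\ref{Th1}; your observation that $\Pi_j\pmb{\beta} \preceq \pmb{\lambda}$ for every permutation lets you apply Theorem~\ref{Th1} to each $\Pi_j\pmb{\beta}$ and then take the minimum over $j$, which is exactly the step that derivation requires.
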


\section{Optimal Portfolio Selection under Dependence Uncertainty}\label{sec:portfolio}

In the preceding section, we established a theoretical connection between the majorization order of portfolio weights and the worst-case risk measures. Specifically, we showed that when the dependence structure is uncertain, diversification does not necessarily reduce worst-case risk measures and may, in many cases, increase them. This raises a fundamental question {for practitioners}: if the dependence structure among assets is uncertain during periods of market stress, what is the optimal way to allocate capital?

We focus on the risk-adjusted return maximization problem \eqref{RAR}. Throughout this section, we assume that {the marginal distributions $F_i$ of the underlying assets' negative returns}, for $i=1, \dots, n$, possess finite means. 
As a baseline, we consider risk functionals that possess strong structural properties. The following result, adapted from \cite{PP18}, demonstrates that for a wide class of risk measures, the optimal strategy is not to diversify, but to concentrate.

\begin{proposition}[Robust Concentration for SA-CA-PH Measures; adapted from \cite{PP18}]\label{prop:PflugPohl}
Suppose $\rho : \X \to \mathbb{R}$ is {law-invariant,} 
SA, 
CA, 
and PH, and $\kappa>0$. 
We have
\begin{equation}
    \sup_{\pmb{\lambda} \in \Delta_n} \inf_{F_{\mathbf{X}} \in \mathcal{E}_n(\mathbf{F})} \{ \mathbb{E}[-\pmb{\lambda}^\top \mathbf{X}] - \kappa\rho(\pmb{\lambda}^\top \mathbf{X}) \} = \max_{j=1, \dots, n} \{ \mathbb{E}[-X_j] - \kappa\rho(X_j) \},
\end{equation}
where $X_j \sim F_j$ with finite mean. In other words, the optimal portfolio concentrates all investment in a single asset $j^* \in \arg\max_{j} \{ \mathbb{E}[-X_j] - \kappa\rho(X_j) \}$.
\end{proposition}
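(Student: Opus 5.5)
The plan is to first compute the inner infimum in closed form for each fixed $\pmb{\lambda}\in\Delta_n$, and then maximize the resulting expression over the simplex.

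\textbf{Inner problem.} The expectation term does not depend on the dependence structure: for every $F_{\mathbf X}\in\mathcal E_n(\mathbf F)$,
\[
\E[-\pmb{\lambda}^\top\mathbf X]=-\sum_{i=1}^n\lambda_i\E[X_i],
\]
which is finite by assumption. Hence the inner infimum equals
\[
-\sum_i\lambda_i\E[X_i]-\kappa\,\overline{\rho}(\mathbf F^{\pmb\lambda}).
\]
The argument in the proof of Theorem \ref{Th1}(i) gives $\overline{\rho}(\mathbf F^{\pmb\lambda})=\sum_i\lambda_i\rho(X_i)$. The upper bound comes from SA, PH and $\lambda_i\ge0$. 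The bound is attained by the comonotonic coupling $(F_1^{-1}(U),\dots,F_n^{-1}(U))$: CA, applied inductively to the nondecreasing functions of $U$, gives $\rho(\sum_i\lambda_iF_i^{-1}(U))=\sum_i\lambda_i\rho(F_i^{-1}(U))$, and law invariance identifies $\rho(F_i^{-1}(U))$ with $\rho(X_i)$. Two small points need checking here:
\begin{itemize}
\item PH with $\lambda_i=0$ gives $\rho(0)=0$, so zero weights cause no trouble.
\item Since $\kappa>0$, the supremum of $\rho$ turns into the infimum of the objective, and this infimum is attained at the comonotonic coupling.
\end{itemize}

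\textbf{Outer problem.} Combining the two terms, the inner value equals $\sum_{i=1}^n\lambda_i\big(\E[-X_i]-\kappa\rho(X_i)\big)$, which is linear in $\pmb\lambda$. A linear function on the simplex $\Delta_n$ attains its maximum at a vertex $e_{j}$, with value $\max_j\{\E[-X_j]-\kappa\rho(X_j)\}$. Indeed, $\sum_i\lambda_ic_i\le\max_jc_j$ for any weights in $\Delta_n$, and equality holds at $\pmb\lambda=e_{j^*}$. This yields both the stated identity and the fact that the concentrated portfolio $e_{j^*}$ is optimal.

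\textbf{Main obstacle.} The only delicate step is the attainment of the worst case by the comonotonic coupling, which requires that $\rho$ be finite and well defined on the comonotonic sum in $\X$. This holds because $\X$ is a linear space containing each $F_i^{-1}(U)$ (a copy of $X_i$). With that in place, the rest is the linearity argument above.
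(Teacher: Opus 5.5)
Your proof is correct and follows essentially the same route as the paper. The paper gives no separate proof and defers to \cite{PP18}, but your two steps are exactly its own ingredients: the identity $\overline{\rho}(\mathbf F^{\pmb\lambda})=\sum_i\lambda_i\rho(X_i)$ (SA and PH for the upper bound, CA at the comonotonic coupling for attainment) is the proof of Theorem \ref{Th1}(i), and the vertex argument for a linear objective on $\Delta_n$ is the one used in Theorems \ref{prop1}--\ref{prop3}. Like the paper, you silently assume that the comonotonic copies $F_i^{-1}(U)$ lie in $\X$; this follows from that richness assumption, not from linearity of $\X$ alone.
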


Proposition \ref{prop:PflugPohl} is striking because it suggests that the ``optimal" robust portfolio is a vertex of the simplex $\Delta_n$ for risk functionals satisfying these three properties. {However, many widely-used risk functionals fail to satisfy all three properties simultaneously: $\VaR$ is generally not subadditive, whereas standard deviation (SD) is subadditive but not generally comonotonically additive.} A natural question is whether this concentration phenomenon persists for other commonly-used risk functionals that do not satisfy these three properties.


To test this, we examine the following risk functionals that are widely used in practice but {do not satisfy all three properties}: $\VaR^+$, $\mathrm{RVaR}$, and SD. We first focus on $\VaR^+_\alpha$, {the right quantile of the loss distribution}.

\begin{theorem}\label{prop1} 
For $\alpha \in (0,1)$ and $\kappa>0$, if $\mathbf{F} \in (\mathcal{M}_D^{\alpha})^n \cup (\mathcal{M}_I^{\alpha})^n$ or $n=2$, we have
\begin{align*}
\sup_{\boldsymbol{\lambda}\in\Delta_n}\inf_{F_{\mathbf X}\in \mathcal{E}_n(\mathbf F)}\left\{\mathbb{E} \left[-\boldsymbol\lambda^\top \mathbf X \right]-
\kappa \mathrm{VaR}_{\alpha}^+ \left(\boldsymbol\lambda^\top \mathbf X \right)\right\}=\mathbb{E} \left[- X_{j^*} \right]-
\kappa \mathrm{VaR}_{\alpha}^+ (X_{j^*}),
\end{align*}
for some $j^* \in \arg\max_{j=1,\dots, n}\{ \mathbb{E} \left[-X_{j} \right]- \kappa \mathrm{VaR}_{\alpha}^+ (X_{j})\}$, where $X_j \sim F_j$ with finite mean. 
\end{theorem}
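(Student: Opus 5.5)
The plan is to reduce the inner infimum to a worst-case quantile and then show that this worst-case quantile is concave in $\pmb{\lambda}$. Concavity turns the outer problem into maximizing a convex function over the simplex, which is attained at a vertex.

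First, the mean term does not depend on the dependence structure. Indeed, $\E[-\pmb{\lambda}^\top\mathbf X]=-\sum_i\lambda_i\mu_i$ with $\mu_i$ the mean of $F_i$. Hence
\begin{equation*}
\inf_{F_{\mathbf X}\in\mathcal E_n(\mathbf F)}\left\{\E[-\pmb\lambda^\top\mathbf X]-\kappa\VaR^+_\alpha(\pmb\lambda^\top\mathbf X)\right\}
=-\sum_{i=1}^n\lambda_i\mu_i-\kappa\,\overline{\VaR^+_\alpha}(\mathbf F^{\pmb\lambda}).
\end{equation*}
Second, I would show that $\overline{\VaR^+_\alpha}(\mathbf F^{\pmb\lambda})=\overline{\VaR}_\alpha(\mathbf F^{\pmb\lambda})$, so that the convolution bound of \cite{BLLW20} already used in the proof of Theorem \ref{Th1}(ii) becomes available. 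The step $\overline{\VaR^+_\alpha}\ge\overline{\VaR}_\alpha$ is immediate. For the reverse inequality, I would use the standard fact that the worst-case left and right quantiles coincide for $\alpha\in(0,1)$. Starting from $\VaR^+_\alpha(S)\le\VaR_{\alpha'}(S)$ for every $\alpha'>\alpha$, one takes the supremum over couplings and lets $\alpha'\downarrow\alpha$. This needs right-continuity of $\alpha\mapsto\overline{\VaR}_\alpha$, which I would obtain from the explicit formula $\inf_{\pmb\gamma\in(1-\alpha)\Theta_n}\sum_i\lambda_iR_{\gamma_i,\gamma_0}(F_i)$. Alternatively, I would cite the known identity $\overline{\VaR}^+_\alpha=\overline{\VaR}_\alpha$ for atomless spaces (e.g.\ Embrechts--Wang--Wang). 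If \cite{BLLW20} directly gives the bound for $\VaR^+$, I would simply cite it.

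Third, with the convolution bound in hand,
\begin{equation*}
\overline{\VaR}_\alpha(\mathbf F^{\pmb\lambda})=\inf_{\pmb\gamma\in(1-\alpha)\Theta_n}\sum_{i=1}^n\lambda_iR_{\gamma_i,\gamma_0}(F_i)
\end{equation*}
is an infimum of functions that are linear in $\pmb\lambda$, so it is concave on $\Delta_n$. This uses $\lambda_i\ge0$ and PH of RVaR, as in Theorem \ref{Th1}(ii). Consequently the objective $-\sum_i\lambda_i\mu_i-\kappa\overline{\VaR}_\alpha(\mathbf F^{\pmb\lambda})$ is convex in $\pmb\lambda$.

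Fourth, a convex function on the simplex attains its supremum at a vertex $e_j$. Finiteness comes from finite means, and convexity gives $f(\pmb\lambda)\le\sum_j\lambda_jf(e_j)\le\max_jf(e_j)$. At $\pmb\lambda=e_j$ the portfolio is $X_j$ alone, and its law does not depend on the dependence structure. Therefore $f(e_j)=\E[-X_j]-\kappa\VaR^+_\alpha(X_j)$, which gives the claimed value with $j^*$ the maximizer.

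The main obstacle is the second step: justifying that the worst-case right quantile equals the convolution-bound expression. That expression is stated for the left quantile $\VaR_\alpha$, including at vertices and at weights with zero entries. A vertex is harmless, since there the bound reduces to $\VaR^+_\alpha(X_j)$ and $f(e_j)$ is known directly. So only the concavity inequality is needed, and for it the bound $\overline{\VaR^+_\alpha}(\mathbf F^{\pmb\lambda})\ge$ the concave expression, or equality, suffices.
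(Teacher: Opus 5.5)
Your architecture matches the paper's. The mean term does not depend on the coupling, and the worst-case $\VaR^+_\alpha$ is written via the convolution bound as an infimum over $\pmb\gamma$ of functions linear in $\pmb\lambda$, so the outer supremum is attained at a vertex. The paper phrases this as interchanging $\sup_{\pmb\lambda}$ and $\sup_{\pmb\gamma}$; your convexity argument is the same thing. Your remark that only $\overline{\VaR^+_\alpha}(\mathbf F^{\pmb\lambda})\ge C(\pmb\lambda)$ and $C(\mathbf e_j)=\VaR^+_\alpha(X_j)$ are needed is correct.

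What fails is your main justification of Step 2. The claim that worst-case left and right quantiles coincide is false in the generality of the theorem, and so is the right-continuity of $\alpha\mapsto\overline{\VaR}_\alpha$ that it relies on. Take $n=2$, $F_1=F_2=$ Bernoulli$(1/2)$ and $\alpha=1/2$; the $n=2$ case imposes no condition on the marginals.
\begin{itemize}
\item For every coupling, $\p(X_1+X_2\le 1)=1-\p(X_1=X_2=1)\ge 1/2$, so $\overline{\VaR}_{1/2}=1$.
\item The comonotonic coupling gives $\VaR^+_{1/2}(X_1+X_2)=2$, and also $\VaR_{\alpha'}(X_1+X_2)=2$ for every $\alpha'>1/2$. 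Hence $\lim_{\alpha'\downarrow 1/2}\overline{\VaR}_{\alpha'}=2\neq 1=\overline{\VaR}_{1/2}$.
\item Even at a vertex the identity fails: $\VaR_{1/2}(X_1)=0\neq 1=\VaR^+_{1/2}(X_1)$.
\end{itemize}
Citing an identity ``for atomless spaces'' does not help, because the discrepancy comes from flat parts of the marginal distribution functions at level $\alpha$, not from atoms of $(\Omega,\mathcal F,\p)$.

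Obtaining right-continuity ``from the explicit formula'' is circular. In this example every window $[1-\gamma_i-\gamma_0,1-\gamma_i]$ lies in $[1/2,1]$, where $F_i^{-1}\equiv 1$. So the convolution expression equals $2$, which is the worst-case \emph{right} quantile, not $\overline{\VaR}_{1/2}=1$. Your own vertex computation $C(\mathbf e_j)=\VaR^+_\alpha(X_j)$ shows the same thing. For the same reason the chain $\overline{\VaR^+_\alpha}\ge\overline{\VaR}_\alpha=C$ is unavailable: it reads the bound as a left-quantile identity, as in the proof of Theorem \ref{Th1}(ii), and here $\overline{\VaR}_{1/2}=1<2=C$.

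The repair is your fallback, which is exactly what the paper does. Invoke Theorem 2 of \cite{BLLW20} directly in its right-quantile form,
\begin{equation*}
\sup_{F_{\mathbf X}\in\mathcal E_n(\mathbf F)}\VaR^+_\alpha(\pmb\lambda^\top\mathbf X)=\inf_{\pmb\gamma\in(1-\alpha)\Theta_n}\sum_{i=1}^n\lambda_iR_{\gamma_i,\gamma_0}(F_i),
\end{equation*}
under $\mathbf F\in(\mathcal M_D^\alpha)^n\cup(\mathcal M_I^\alpha)^n$ or $n=2$. Then check $C(\mathbf e_j)=\VaR^+_\alpha(X_j)$ explicitly. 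This takes two lines: all windows lie in $[\alpha,1]$, where $F_j^{-1}\ge F_j^{-1,+}(\alpha)$, and the choice $\gamma_j=1-\alpha-\gamma_0$ with $\gamma_0\downarrow 0$ attains the bound. With these two facts your Steps 1, 3 and 4 go through unchanged. The detour through the left quantile should simply be deleted.
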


\begin{proof}

Under the assumption that 
{$\mathbf{F} \in (\mathcal{M}_D^{\alpha})^n \cup (\mathcal{M}_I^{\alpha})^n$} 
or $n=2$, applying the convolution bound from Theorem 2 of \cite{BLLW20}, we have
\begin{equation}\label{eq:conv_bound}
    \sup_{F_{\mathbf{X}} \in \mathcal{E}_n(\mathbf{F})} \VaR_{\alpha}^+ \left( \sum_{i=1}^n \lambda_i X_i \right) = \inf_{\pmb{\gamma} \in (1-\alpha)\Theta_n} \sum_{i=1}^n \lambda_i R_{\gamma_i, \gamma_0}(X_i),
\end{equation}
where $\Theta_n = \{ \pmb{\gamma} \in (0,1) \times [0,1)^n : \sum_{i=0}^{n} \gamma_i = 1 \}$ and $\pmb{\gamma} = (\gamma_0, \gamma_1, \dots, \gamma_n)$.  
Then the optimization problem can be represented as 
\begin{equation}
\begin{aligned}
&\sup_{\boldsymbol{\lambda} \in \Delta_n} \inf_{F_{\mathbf{X}} \in \mathcal{E}_n(\mathbf{F})}
\left\{ \mathbb{E}[-\boldsymbol{\lambda}^\top \mathbf{X}] - \kappa
\mathrm{VaR}_\alpha^+(\boldsymbol{\lambda}^\top \mathbf{X}) \right\} \\
&= \sup_{\boldsymbol{\lambda} \in \Delta_n} \sup_{\pmb{\gamma} \in (1-\alpha)\Theta_n}
\sum_{i=1}^n \lambda_i \left( \mathbb{E}[-X_i] - \kappa R_{\gamma_i, \gamma_0}(X_i) \right)\\
&= \sup_{\pmb{\gamma} \in (1-\alpha)\Theta_n} \sup_{\boldsymbol{\lambda} \in \Delta_n}
\sum_{i=1}^n \lambda_i \left( \mathbb{E}[-X_i] - \kappa R_{\gamma_i,
\gamma_0}(X_i) \right), \label{eq:after_swap}
\end{aligned}
\end{equation}
where $X_i \sim F_i$. 
For any fixed $\pmb{\gamma}$, we denote by $\sum_{i=1}^n \lambda_i h_i(\pmb{\gamma})$ the objective in the right-hand side of \eqref{eq:after_swap}, 
where $h_i(\pmb{\gamma}) = \mathbb{E}[-X_i] - \kappa R_{\gamma_i, \gamma_0}(X_i)$, and this objective is linear in $\boldsymbol{\lambda}$. A linear
function over the compact convex polytope $\Delta_n$ attains its supremum at an extreme
point. Since the extreme points of $\Delta_n$ are precisely the unit vectors $\Delta_n^0 =
\{ \boldsymbol{\lambda} \in \Delta_n : \lambda_i \in \{0,1\},~i=1,\dots,n \}$, we have:
\begin{equation}
\sup_{\boldsymbol{\lambda} \in \Delta_n} \sum_{i=1}^n \lambda_i h_i(\pmb{\gamma})
= \max_{\boldsymbol{\lambda} \in \Delta_n^0} \sum_{i=1}^n \lambda_i h_i(\pmb{\gamma})
= \max_{j=1,\dots,n} h_j(\pmb{\gamma}). \label{eq:vertex}
\end{equation}
Substituting \eqref{eq:vertex} into \eqref{eq:after_swap}, and using the fact that
$\Delta_n^0$ is a finite set (the interchange between $\sup_{\pmb{\gamma}}$ and
$\max_{\boldsymbol{\lambda} \in \Delta_n^0}$ is always valid), we obtain
\begin{align*}
\eqref{eq:after_swap}
&= \max_{\boldsymbol{\lambda} \in \Delta_n^0} \sup_{\pmb{\gamma} \in (1-\alpha)\Theta_n}
\sum_{i=1}^n \lambda_i \left( \mathbb{E}[-X_i] - \kappa R_{\gamma_i,
\gamma_0}(X_i) \right) \\
&= \max_{\boldsymbol{\lambda} \in \Delta_n^0} \inf_{F_{\mathbf{X}} \in \mathcal{E}_n(\mathbf{F})}
\left\{ \mathbb{E}[-\boldsymbol{\lambda}^\top \mathbf{X}] - \kappa
\mathrm{VaR}_\alpha^+(\boldsymbol{\lambda}^\top \mathbf{X}) \right\} \\
&= \max_{j=1,\dots,n} \left\{ \mathbb{E}[-X_j] - \kappa \mathrm{VaR}_\alpha^+(X_j) \right\},
\end{align*}
where the last equality uses the fact that for $\boldsymbol{\lambda} \in \Delta_n^0$, the
portfolio reduces to a single asset, i.e., $\boldsymbol{\lambda}^\top \mathbf{X} = X_j$ for
some $j \in \{1,\dots,n\}$. This completes the proof.
\end{proof}

Theorem \ref{prop1} demonstrates that {the concentration phenomenon persists even for the non-coherent measure $\VaR^+$}, provided the marginal densities satisfy certain monotonicity conditions. This suggests that the concentration result is a fundamental consequence of the \emph{worst-case dependence structure} rather than the specific properties of the adopted risk measure. This finding is further reinforced when we consider $\mathrm{RVaR}$.

\begin{theorem}\label{prop2} 
For any $\alpha, \beta$ with $0 \leq \beta < \beta + \alpha \leq 1$, $\kappa>0$ and $\mathbf{F} \in (\mathcal{M}_D^{1-\alpha-\beta})^n$, we have
\begin{align*}
\sup_{\boldsymbol{\lambda}\in\Delta_n}\inf_{F_{\mathbf X}\in \mathcal{E}_n(\mathbf F)}\left\{\mathbb{E} \left[-\boldsymbol\lambda^\top \mathbf X \right]-
\kappa R_{\beta, \alpha}\left(\boldsymbol\lambda^\top \mathbf X \right)\right\}=\mathbb{E} \left[-X_{j^*} \right]-
\kappa R_{\beta, \alpha} (X_{j^*}),
\end{align*}
for some $j^* \in \arg\max_{j=1, \dots, n} \{ \mathbb{E} \left[-X_{j} \right]- \kappa R_{\beta, \alpha} (X_{j})\}$, where $X_j \sim F_j$ with finite mean. 
\end{theorem}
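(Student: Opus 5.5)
The plan is to mirror the proof of Theorem \ref{prop1}, replacing the $\VaR^+$ convolution bound by the RVaR convolution bound of Proposition 4 in \cite{FHLX23}, already used in Theorem \ref{Th1}(iii).

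\textbf{Step 1: reduce the inner problem to a worst-case RVaR.} Since $\boldsymbol\lambda$ is fixed in the inner problem, $\E[-\boldsymbol\lambda^\top\mathbf X]=-\sum_i\lambda_i\E[X_i]$ depends only on the marginals. Hence the inner infimum equals
\[
-\sum_{i=1}^n\lambda_i\E[X_i]-\kappa\,\overline{R}_{\beta,\alpha}(\mathbf F^{\boldsymbol\lambda}).
\]

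\textbf{Step 2: write the worst case as an infimum of linear functions.} Under $\mathbf F\in(\mathcal M_D^{1-\alpha-\beta})^n$, I will invoke the convolution bound of \cite{FHLX23} in the form
\[
\overline{R}_{\beta,\alpha}(\mathbf F^{\boldsymbol\lambda})=\inf_{\pmb\gamma\in\Gamma}\sum_{i=1}^n\lambda_i\,\rho_{\pmb\gamma,i}(F_i).
\]
Here $\Gamma$ is a parameter set that does not depend on $\boldsymbol\lambda$, and each $\rho_{\pmb\gamma,i}$ is an RVaR-type functional of $F_i$ whose parameters are determined by $\pmb\gamma$, $\alpha$ and $\beta$. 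The linearity in $\lambda_i$ follows from positive homogeneity of RVaR, $R(\lambda X)=\lambda R(X)$ for $\lambda\ge0$. One subtlety is the case $\lambda_i=0$. The scaled marginal $F_i^{0}$ is then a point mass at $0$, and I must check that the bound still holds and that this term vanishes. It does, because $\rho_{\pmb\gamma,i}(\delta_0)=0$; alternatively, the case can be handled by continuity in $\boldsymbol\lambda$, or by noting that the assumption $\mathcal M_D$ is only needed for the support of the convolution bound. This is where I expect the main care to be needed, just as in Theorem \ref{Th1}(ii)--(iii).

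\textbf{Step 3: swap the suprema and optimize over the simplex.} Set $h_i(\pmb\gamma)=\E[-X_i]-\kappa\rho_{\pmb\gamma,i}(F_i)$. Then
\[
\sup_{\boldsymbol\lambda}\inf_{F_{\mathbf X}}\{\cdot\}=\sup_{\boldsymbol\lambda\in\Delta_n}\sup_{\pmb\gamma\in\Gamma}\sum_i\lambda_ih_i(\pmb\gamma).
\]
Two suprema always commute, so this equals $\sup_{\pmb\gamma}\max_j h_j(\pmb\gamma)$, because a linear function on $\Delta_n$ attains its maximum at a vertex. Since the index set $\{1,\dots,n\}$ is finite, this in turn equals $\max_j\sup_{\pmb\gamma}h_j(\pmb\gamma)$. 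This is the value of the problem restricted to the vertices $\boldsymbol\lambda=\mathbf e_j$.

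\textbf{Step 4: evaluate at the vertices.} At a vertex $\boldsymbol\lambda=\mathbf e_j$, the portfolio is $X_j$ itself, so the worst-case RVaR over the dependence structure is just $R_{\beta,\alpha}(X_j)$, with no dependence left to choose. Therefore $\sup_{\pmb\gamma}h_j(\pmb\gamma)=\E[-X_j]-\kappa R_{\beta,\alpha}(X_j)$. Maximizing over $j$ gives the claim with $j^*$ in the stated argmax.

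The main obstacle is Step 2: stating the \cite{FHLX23} bound precisely and confirming it is exact for scaled marginals, including zero weights. Once that is in place, the rest is the same linear-programming argument as in Theorem \ref{prop1}.
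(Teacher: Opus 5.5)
Your proposal is correct and matches the paper's proof step for step. The paper takes your $\Gamma$ and $\rho_{\pmb\gamma,i}$ to be $\{\pmb\gamma\in(\beta+\alpha)\Theta_n:\gamma_0\geq\alpha\}$ and $R_{\gamma_i,\gamma_0}$ (via Theorem 1 of \cite{BLLW20} and Proposition 4 of \cite{FHLX23}), then swaps the two suprema, passes to the vertices $\Delta_n^0$, interchanges the finite maximum with $\sup_{\pmb\gamma}$, and evaluates at single assets. The paper simply assumes the zero-weight case you flag. It is harmless: for $\lambda_i=0$, any budget $\gamma_i>0$ can be moved into $\gamma_0$ without increasing the other terms $R_{\gamma_j,\gamma_0}(\lambda_jX_j)$, so the formula reduces to the sharp bound for the assets with positive weight.
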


\begin{proof}
By Theorem 1 of \cite{BLLW20} and Proposition 4 of \cite{FHLX23}, we have
\begin{align*}
&\sup_{\boldsymbol{\lambda}\in\Delta_n}\inf_{F_{\mathbf X}\in \mathcal{E}_n(\mathbf F)}\left\{\mathbb{E} \left[-\boldsymbol\lambda^\top \mathbf X \right]-
\kappa R_{\beta, \alpha}\left(\boldsymbol\lambda^\top \mathbf X \right)\right\}\\
&=\sup_{\boldsymbol{\lambda}\in\Delta_n}
\sup_{\pmb{\gamma}\in(\beta+\alpha)\Theta_n,\gamma_0\geq \alpha}\left\{\mathbb{E} \left[-\boldsymbol\lambda^\top \mathbf X \right]-\sum_{i=1}^{n}\kappa R_{\gamma_i,\gamma_0}(\lambda_i X_i)\right\}.
\end{align*}
Further computation shows that
\begin{align*}
&\sup_{\boldsymbol{\lambda}\in\Delta_n}
\sup_{\pmb{\gamma}\in(\beta+\alpha)\Theta_n,\gamma_0\geq \alpha}\left\{\mathbb{E} \left[-\boldsymbol\lambda^\top \mathbf X \right]-\sum_{i=1}^{n}\kappa R_{\gamma_i,\gamma_0}(\lambda_i X_i)\right\}\\
&=
\sup_{\pmb{\gamma}\in(\beta+\alpha)\Theta_n,\gamma_0\geq \alpha}\sup_{\boldsymbol{\lambda}\in\Delta_n}\left\{\mathbb{E} \left[-\boldsymbol\lambda^\top \mathbf X \right]-\sum_{i=1}^{n}\kappa R_{\gamma_i,\gamma_0}(\lambda_i X_i)\right\}\\
&=
\sup_{\pmb{\gamma}\in(\beta+\alpha)\Theta_n,\gamma_0\geq \alpha}\max_{\boldsymbol{\lambda}\in\Delta_n^0}\left\{\sum_{i=1}^{n}\lambda_i(\mathbb{E}[-X_i]-\kappa R_{\gamma_i,\gamma_0}(X_i))\right\}\\
&=
\max_{\boldsymbol{\lambda}\in\Delta_n^0}\sup_{\pmb{\gamma}\in(\beta+\alpha)\Theta_n,\gamma_0\geq \alpha}\left\{\sum_{i=1}^{n}\lambda_i(\mathbb{E}[-X_i]-\kappa R_{\gamma_i,\gamma_0}(X_i))\right\}\\
&=
\max_{\boldsymbol{\lambda}\in\Delta_n^0}\inf_{F_{\mathbf X}\in \mathcal{E}_n(\mathbf F)}\left\{\mathbb{E} \left[-\boldsymbol\lambda^\top \mathbf X \right]-
\kappa R_{\beta, \alpha}\left(\boldsymbol\lambda^\top \mathbf X \right)\right\}\\
&=\max_{j=1,\dots, n} \{\mathbb{E} \left[-X_{j} \right]-
\kappa R_{\beta, \alpha} (X_{j})\}.
\end{align*}
\end{proof}

The consistent conclusions in Proposition \ref{prop:PflugPohl} and Theorems \ref{prop1}-\ref{prop2} highlight that the concentration phenomenon is remarkably robust across many commonly-used risk functionals in practice. {These results suggest} an important implication for robust portfolio management: under complete dependence uncertainty, diversification may no longer be desirable; instead, the optimal capital allocation may be to invest in a single asset. By concentrating, the investor eliminates the possibility of ``unfavorable'' dependence structures that may arise in diversified portfolios {during periods of market stress}.

Finally, we examine the case when the risk is evaluated by SD. Unlike the tail-risk measures discussed above, the SD is not comonotonic additive.

\begin{theorem}\label{prop3} 
For $\rho = \text{SD}$ and $\kappa>0$, if all $F_i$, $i=1, \dots, n$, have finite second moments, then we have
\begin{align*}
&\sup_{\boldsymbol{\lambda}\in\Delta_n}\inf_{F_{\mathbf X}\in \mathcal{E}_n(\mathbf F)}\left\{\mathbb{E} \left[-\boldsymbol\lambda^\top \mathbf X \right]-
\kappa \text{SD}\left(\boldsymbol\lambda^\top \mathbf X \right)\right\}\\
&=\sup_{\boldsymbol{\lambda}\in\Delta_n}\left\{\sum_{i=1}^{n}\lambda_i \mathbb{E}[-X_i]-
\kappa \text{SD}\left(\sum_{i=1}^n \lambda_i F_i^{-1}(U) \right)\right\},
\end{align*}
where $X_j \sim F_j$ with finite mean. Furthermore, if $F_i,\; i=1, \dots, n$, are in the same location-scale family, then we further have
\begin{align*}
&\sup_{\boldsymbol{\lambda}\in\Delta_n}\inf_{F_{\mathbf X}\in \mathcal{E}_n(\mathbf F)}\left\{\mathbb{E} \left[-\boldsymbol\lambda^\top \mathbf X \right]-
\kappa \text{SD}\left(\boldsymbol\lambda^\top \mathbf X \right)\right\}
=\mathbb{E}[-X_{j^*}]-
\kappa \text{SD}(X_{j^*}),
\end{align*}
for some $j^* \in \arg\max_{j=1, \dots, n} \{ \mathbb{E} \left[-X_{j} \right]- \kappa \text{SD}(X_{j})\}$.
\end{theorem}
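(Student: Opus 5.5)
The proof splits into two parts: first reduce the inner infimum to the comonotonic coupling, then exploit the location-scale structure.

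\textbf{Step 1: the inner problem is attained at comonotonicity.} For fixed $\boldsymbol\lambda\in\Delta_n$, the mean $\E[-\boldsymbol\lambda^\top\mathbf X]=\sum_i\lambda_i\E[-X_i]$ does not depend on the coupling. So the inner infimum equals $\sum_i\lambda_i\E[-X_i]-\kappa\sup_{F_{\mathbf X}\in\mathcal E_n(\mathbf F)}\mathrm{SD}(\boldsymbol\lambda^\top\mathbf X)$. The variance expands as
\[
\Var(\boldsymbol\lambda^\top\mathbf X)=\sum_i\lambda_i^2\Var(X_i)+2\sum_{i<j}\lambda_i\lambda_j\mathrm{Cov}(X_i,X_j).
\]
The diagonal terms are fixed by the marginals. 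By the Hoeffding--Fr\'echet bound (the rearrangement inequality for $\E[X_iX_j]$), each covariance is maximized by the comonotonic pair $(F_i^{-1}(U),F_j^{-1}(U))$. Because $\lambda_i\lambda_j\ge0$, the single coupling $\mathbf Z=(F_1^{-1}(U),\dots,F_n^{-1}(U))$ maximizes all cross terms simultaneously. Alternatively, one can argue that the sum is then the convex-order maximum and that variance is consistent with convex order. Since $\mathbf Z\in\mathcal E_n(\mathbf F)$, the supremum is attained, and finite second moments guarantee that everything is finite. Taking $\sup_{\boldsymbol\lambda}$ gives the first displayed identity.

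\textbf{Step 2: the location-scale case.} Write $F_i^{-1}(U)=\mu_i+\sigma_iG^{-1}(U)$ with $\sigma_i>0$ for a standard member $G$; if the family has finite variance we may normalize $\Var(G^{-1}(U))=1$. Then
\[
\sum_i\lambda_iF_i^{-1}(U)=\sum_i\lambda_i\mu_i+\Big(\sum_i\lambda_i\sigma_i\Big)G^{-1}(U),
\]
so its SD is $\sum_i\lambda_i\sigma_i=\sum_i\lambda_i\mathrm{SD}(X_i)$. The objective therefore becomes $\sum_i\lambda_i\big(\E[-X_i]-\kappa\,\mathrm{SD}(X_i)\big)$, which is linear in $\boldsymbol\lambda$ on the simplex. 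Exactly as in \eqref{eq:vertex} in the proof of Theorem~\ref{prop1}, its maximum is attained at a vertex, namely $\max_j\{\E[-X_j]-\kappa\,\mathrm{SD}(X_j)\}$. This gives the second identity with $j^*$ the maximizer.

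\textbf{Main obstacle.} The only nontrivial point is Step 1, and specifically that the comonotonic coupling maximizes every pairwise covariance at once, so that the supremum of the full quadratic form is attained by a single coupling. I would invoke the standard rearrangement (Hoeffding) inequality $\E[XY]\le\int_0^1F_X^{-1}(t)F_Y^{-1}(t)\,\d t$ for square-integrable marginals, with nonnegativity of the weights closing the argument. Step 2 also implicitly uses that all $F_i$ share the same standard shape $G$, which is exactly what the location-scale assumption provides.
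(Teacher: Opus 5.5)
Your proposal is correct and follows essentially the same route as the paper: bound each pairwise covariance by its comonotonic value (the paper uses Hoeffding's covariance formula with the Fr\'echet upper bound), attain the worst-case variance at the single comonotonic coupling, then write the comonotonic SD as $\sum_i\lambda_i\,\mathrm{SD}(X_i)$ in the location-scale case and maximize the resulting linear objective at a vertex of the simplex. You also state explicitly that $\lambda_i\lambda_j\geq 0$ is what lets one coupling maximize all cross terms simultaneously, a point the paper leaves implicit.
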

\begin{proof}
Note that 
$$
\text{Var}\left(\boldsymbol\lambda^\top \mathbf X \right)=\sum_{i=1}^{n}\lambda_i^2\text{Var}(X_i)+\sum_{i\neq j} \lambda_i\lambda_j \text{Cov}(X_i,X_j).
$$
Using the fact that
$$
\text{Cov}(X_i,X_j)=\int_{\mathbb{R}}\int_{\mathbb{R}} \left[\mathbb{P}(X_i\leq x, X_j\leq y)-F_i(x)F_j(y)\right]\mathrm{d}x\mathrm{d}y,
$$
we have 
$$
\text{Cov}(X_i,X_j)\leq \text{Cov}(F_i^{-1}(U), F_j^{-1}(U)).
$$
Hence,
$$
\sup_{F_{\mathbf X}\in \mathcal{E}_n(\mathbf F)}\text{Var}\left(\boldsymbol\lambda^\top \mathbf X \right)=\text{Var}\left(\sum_{i=1}^n \lambda_i F_i^{-1}(U) \right),
$$
which implies the first claim.

If $F_i$, $i=1, \dots, n$, are in the same location-scale family, then
$$\text{Var}\left(\sum_{i=1}^n \lambda_i F_i^{-1}(U) \right)=\left(\sum_{i=1}^{n}\lambda_i \text{SD}(X_i)\right)^2.$$
Hence, we have
\begin{align*}
\sup_{\boldsymbol{\lambda}\in\Delta_n}\inf_{F_{\mathbf X}\in \mathcal{E}_n(\mathbf F)}\left\{\mathbb{E} \left[-\boldsymbol\lambda^\top \mathbf X \right]-
\kappa \text{SD}\left(\boldsymbol\lambda^\top \mathbf X \right)\right\}
&=\sup_{\boldsymbol{\lambda}\in\Delta_n}\left\{\sum_{i=1}^{n}\lambda_i\mathbb{E}[-X_i]-
\kappa \sum_{i=1}^{n}\lambda_i \text{SD}(X_i)\right\}\\
&=\sup_{\boldsymbol{\lambda}\in\Delta_n}\left\{\sum_{i=1}^{n}\lambda_i(\mathbb{E}[-X_i]-
\kappa \text{SD}(X_i))\right\}\\
&=\mathbb{E}[-X_{j^*}]-
\kappa \text{SD}(X_{j^*}),
\end{align*}
for some $j^* \in \arg\max_{j=1, \dots, n} \{ \mathbb{E} \left[-X_{j} \right]- \kappa \text{SD}(X_{j})\}$.
\end{proof}

Theorem \ref{prop3} reveals a nuanced boundary of the concentration phenomenon. While concentration remains optimal for assets with negative returns belonging to the same location-scale family, the optimal weights under $\text{SD}$ are more sensitive to the specific shapes of the marginal distributions. This suggests that, although the ``concentration paradox'' persists for many risk functionals, it may be mitigated when assets exhibit substantially different distributional characteristics, thereby potentially restoring the benefit of diversification.

Note that if $F_i$, $i=1,\dots,n$, do not belong to the same location-scale family, then the optimal decision variables in Theorem \ref{prop3} need not correspond to concentrated weights. The following example illustrates this phenomenon for $n=2$.
\begin{example}
{We consider a portfolio with only two assets. The negative return $X_1$ follows a normal distribution $\mathrm{N}(0, 1)$, and $X_2$ follows a Laplace (double-exponential) distribution with mean $0$ and variance $1$. The negative returns of both assets have the same mean and variance, but their distributions do not belong to the same location-scale family.}
{We now evaluate the robust objective}
\begin{equation*}
\inf_{F_{\mathbf X} \in \mathcal{E}_2(\mathbf{F})} \{ \mathbb{E}[-\boldsymbol{\lambda}^\top \mathbf{X}] - \kappa \mathrm{SD}(\boldsymbol{\lambda}^\top \mathbf{X}) \}= -\kappa \sup_{F_{\mathbf X} \in \mathcal{E}_2(\mathbf{F})} \mathrm{SD}(\boldsymbol{\lambda}^\top \mathbf{X}).
\end{equation*}
As established in the proof of Theorem \ref{prop3}, the worst-case SD is attained under comonotonicity, i.e.,
$$
\sup_{\mathbf{X} \in \mathcal{E}_2(\mathbf{F})} \text{SD}(\lambda_1 X_1 + \lambda_2 X_2) = \text{SD}(\lambda_1 F_1^{-1}(U) + \lambda_2 F_2^{-1}(U)).
$$
Let us evaluate this for a diversified portfolio $\boldsymbol{\lambda} = (0.5, 0.5)$ versus a concentrated portfolio $\boldsymbol{\lambda} = (1, 0)$.
\begin{enumerate}[(i)]
    \item For $\boldsymbol{\lambda} = (1, 0)$, the risk is $\text{SD}(F_1^{-1}(U)) = 1$.
    \item For $\boldsymbol{\lambda} = (0.5, 0.5)$, the risk is $\text{SD}(0.5 F_1^{-1}(U) + 0.5 F_2^{-1}(U))$. 
\end{enumerate}
{The quantile function of the normal distribution is $F_1^{-1}(u)=\Phi^{-1}(u)$, whereas the quantile function of a variance-one Laplace distribution is}
\begin{equation*}
F_2^{-1}(u)=-\frac{1}{\sqrt{2}}\operatorname{sgn}(u-0.5)\ln(1-2|u-0.5|).
\end{equation*}
{Let $c=\operatorname{Cov}(F_1^{-1}(U),F_2^{-1}(U))$. Both quantile functions are increasing and standardized, but they are not affine transforms of one another; hence $c<1$. Therefore,}
\begin{equation*}
\operatorname{Var}\!\left(0.5F_1^{-1}(U)+0.5F_2^{-1}(U)\right)=\frac{1+c}{2}<1.
\end{equation*}
{Consequently,}
$$
-\kappa \cdot \text{SD}(0.5 F_1^{-1}(U) + 0.5 F_2^{-1}(U)) > -\kappa \cdot 1.
$$
{Because both vertices have risk one while the midpoint has strictly smaller risk, no vertex is optimal; the optimal weight vector $\boldsymbol{\lambda}^*$ is diversified.} This demonstrates that the concentration phenomenon for SD depends critically on the location-scale homogeneity of the negative returns of the assets.
\end{example}

We can also consider the case where a risk manager seeks to maximize the objective subject to a fixed {target expected return} $\mu\in\R$. We define $\Delta_n(\mu) = \Delta_n \cap \{ \boldsymbol{\lambda} : \mathbb{E} [-\boldsymbol\lambda^\top \mathbf X ] = \mu \}$. Then the portfolio selection problem becomes
$$\sup_{\boldsymbol{\lambda}\in\Delta_n(\mu)}\inf_{F_{\mathbf X}\in \mathcal{E}_n(\mathbf F)}\left\{\mathbb{E} \left[-\boldsymbol\lambda^\top \mathbf X \right]-
\kappa \rho\left(\boldsymbol\lambda^\top \mathbf X \right)\right\},$$
which is equivalent to
\begin{equation*}
    \inf_{\boldsymbol{\lambda} \in \Delta_n(\mu)} \sup_{F_{\mathbf X} \in \mathcal{E}_n(\mathbf F)} \rho (\boldsymbol\lambda^\top \mathbf X).
\end{equation*}
In this case, portfolio concentration may be mitigated by imposing a suitable expected return constraint. We can easily simplify the above optimization problem by applying the same arguments used for  $\rho=\VaR$, $\mathrm{RVaR}$ or  SD in Theorems \ref{prop1}-\ref{prop3}. The details are omitted. 

\section{Optimal Portfolio Selection under Structured Uncertainty}\label{sec:structured}

In the previous sections, we demonstrated that, under complete dependence uncertainty with fixed marginal distributions, the robust optimal portfolio strategy for many commonly-used risk functionals is to concentrate the portfolio in a single asset. This result arises because, in the absence of information about the joint distribution, the worst-case dependence structure penalizes diversification.

In practice, however, a risk manager may face other forms of uncertainty. For example, they may possess a reference model, information on the first two moments, or specific constraints on the expected return. We now explore how these more structured forms of uncertainty affect the optimal portfolio selection.

\subsection{Wasserstein Uncertainty Sets}
One popular notion used in mass transportation and distributionally robust optimization is the Wasserstein metric. 
For two $n$-dimensional distributions $F$ and $G$, $a, p \geq 1$, the Wasserstein metric is defined as
\begin{equation*}
    d_{a,p}^{n}(F, G) = \inf_{\mathbf X \sim F, \mathbf Y \sim G} \left( \mathbb{E} [ \|\mathbf X - \mathbf Y\|_a^p ] \right)^{1/p},
\end{equation*}
where $\|\cdot\|_a$ represents the {$\ell_a$ norm}.
If the reference distribution is denoted by $F_0$ satisfying suitable moment conditions, for $\epsilon > 0$,
we define the Wasserstein ball around $F_0$ as $\mathcal{M}_{a,p,\epsilon}^{n}(F_0) = \{ F : d_{a,p}^{n}(F, F_0) \leq \epsilon \}$.
For more details about {the definition and properties of the Wasserstein metric}, we refer to \cite{EK18} and \cite{BM19}. Moreover, we refer to \cite{BCZ22} for the mean--variance portfolio optimization under Wasserstein uncertainty sets.

Let $\mathcal H$ denote the set of all functions $g:\R\to\R$  with bounded variation satisfying $g(0)=g(0+)=0$ and $g(1)=g(1-)$. For $g\in\mathcal H$, the distortion riskmetric $\rho_g$ is defined as
\begin{equation*}
    \rho_g(X) = \int_{0}^{\infty} g(\mathbb{P}(X > x)) \mathrm{d}x + \int_{-\infty}^{0} [g(\mathbb{P}(X > x)) - g(1)] \mathrm{d}x.
\end{equation*}
The distortion riskmetric introduced in \cite{WWW20a} and \cite{WWW20}  is a general class of risk functionals including the classical distortion risk measures, Gini-deviation and mean-median deviation as important examples.

We next present the results for the portfolio optimization problem under distortion riskmetrics with Wasserstein uncertainty sets. For $g\in\mathcal H$, let $g_{\kappa}(x) = x + \kappa g(x)$, $x\in [0,1]$, for some {$\kappa>0$}.
\begin{proposition}\label{th:wasserstein}
Suppose {$\kappa > 0$, $p>1$, and $g\in\mathcal H$ with concave $g_\kappa$}. Then we have
\begin{align*}
\sup_{\boldsymbol\lambda \in \Delta_n} \inf_{F_{\mathbf X} \in \mathcal{M}_{a,p,\epsilon}^n(F_0)} \left\{ \mathbb{E} \left[ -\boldsymbol\lambda^\top \mathbf X \right] - \kappa \rho_g(\boldsymbol\lambda^\top \mathbf X) \right\} = \sup_{\boldsymbol\lambda \in \Delta_n} \left\{ -\rho_{g_\kappa}(\boldsymbol{\lambda}^\top \mathbf X_0) - \epsilon \left( \int_0^1 |1 + \kappa g'(t)|^q \mathrm{d}t \right)^{1/q} \|\boldsymbol\lambda\|_b \right\},
\end{align*}
where  $\mathbf X_0\sim F_0$, $1/a + 1/b = 1$ and $1/p + 1/q = 1$.
\end{proposition}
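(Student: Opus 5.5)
The approach is to reduce the inner infimum over the multivariate Wasserstein ball to a univariate worst-case problem for the portfolio loss, and then apply a known closed form for distortion riskmetrics over a one-dimensional Wasserstein ball. First I rewrite the objective. Since $\E[-Y] = -\rho_{\mathrm{id}}(Y)$ and distortion riskmetrics are additive in the distortion, we have
\[
\E[-Y]-\kappa\rho_g(Y)=-\rho_{g_\kappa}(Y), \qquad Y=\boldsymbol\lambda^\top\mathbf X .
\]
So the inner problem becomes $-\sup_{F_{\mathbf X}\in\mathcal M^n_{a,p,\epsilon}(F_0)}\rho_{g_\kappa}(\boldsymbol\lambda^\top\mathbf X)$. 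This is a worst-case distortion riskmetric with a concave distortion $g_\kappa$.

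Second, I project the multivariate ball onto the portfolio direction. I claim that
\[
\{F_{\boldsymbol\lambda^\top\mathbf X}: F_{\mathbf X}\in\mathcal M^n_{a,p,\epsilon}(F_0)\}=\mathcal M^1_{p,\epsilon\|\boldsymbol\lambda\|_b}(F_{\boldsymbol\lambda^\top\mathbf X_0}).
\]
The inclusion ``$\subseteq$'' follows from Hölder: $|\boldsymbol\lambda^\top(\mathbf X-\mathbf Y)|\le\|\boldsymbol\lambda\|_b\|\mathbf X-\mathbf Y\|_a$, so couplings are preserved with the distance scaled by $\|\boldsymbol\lambda\|_b$. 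For ``$\supseteq$'', given a univariate $Z$ with $\E|Z-\boldsymbol\lambda^\top\mathbf X_0|^p\le\epsilon^p\|\boldsymbol\lambda\|_b^p$, I set $\mathbf X=\mathbf X_0+\mathbf v\,(Z-\boldsymbol\lambda^\top\mathbf X_0)$. Here $\mathbf v$ is a vector with $\boldsymbol\lambda^\top\mathbf v=1$ and $\|\mathbf v\|_a=1/\|\boldsymbol\lambda\|_b$, which is the dual-norm equality case. This $\mathbf X$ then lies in the ball and has $\boldsymbol\lambda^\top\mathbf X=Z$. The same projection argument appears in \cite{MWZ22}-type results and in \cite{BCZ22}.

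Third, I solve the univariate problem. I write $\rho_{g_\kappa}(Y)=\int_0^1 F_Y^{-1}(1-t)\,\d g_\kappa(t)=\int_0^1 F_Y^{-1}(s)\,h(s)\,\d s$ with weight $h(s)=g_\kappa'(1-s)$. This weight is nondecreasing because $g_\kappa$ is concave. Comonotonic coupling gives $d_p(F_Y,F_{Y_0})=\|F_Y^{-1}-F_{Y_0}^{-1}\|_{L^p(0,1)}$. Hence, with $\delta=\epsilon\|\boldsymbol\lambda\|_b$,
\[
\rho_{g_\kappa}(Y)\le \rho_{g_\kappa}(Y_0)+\delta\|h\|_{L^q}.
\]
This bound follows from Hölder applied to $\int (F_Y^{-1}-F_{Y_0}^{-1})h$.

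For attainability, I perturb the quantile by $\delta\,|h|^{q-1}\mathrm{sgn}(h)/\|h\|_q^{q-1}$, which is the Hölder equality direction. Since $h$ is nondecreasing, this function is also nondecreasing, so the perturbed quantile $F_{Y_0}^{-1}+\text{(that)}$ is still a valid quantile function. This requires $p>1$, so that $q<\infty$, and $\|h\|_q<\infty$; I will note the implicit integrability assumption needed for the latter. Since $\|h\|_q=(\int_0^1|1+\kappa g'(t)|^q\d t)^{1/q}$, combining the three steps and taking $\sup_{\boldsymbol\lambda\in\Delta_n}$ gives the statement.

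The main obstacle is the attainment and monotonicity in the third step, together with the $\supseteq$ direction of the projection, in the case where $h$ changes sign. I would check that the function $x\mapsto|x|^{q-1}\mathrm{sgn}(x)$ is increasing, which makes the monotonicity of the perturbation automatic. If sharpness fails at the boundary, I would argue via a sup over an approximating sequence rather than claiming the supremum is attained.
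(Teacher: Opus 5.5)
Your proposal is correct and follows the same route as the paper. You rewrite the objective as $-\rho_{g_\kappa}(\boldsymbol\lambda^\top\mathbf X)$, project the multivariate Wasserstein ball onto the univariate ball of radius $\epsilon\|\boldsymbol\lambda\|_b$ around $F_{\boldsymbol\lambda^\top\mathbf X_0}$, and then apply the closed-form worst case of a concave distortion riskmetric over that ball. The only difference is that the paper cites these two ingredients (Theorem 5 of \cite{MWW24} for the projection, Proposition 4 of \cite{LMWW22} for the univariate bound), whereas you prove them directly via the dual-norm construction and the H\"older-equality quantile perturbation; both arguments are sound.
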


\begin{proof}
Following Theorem 5 of \cite{MWW24}, we have $\mathcal{M}_{1,p,\epsilon\|\boldsymbol\lambda\|_b}^{1}(F_{\boldsymbol{\lambda}^\top \mathbf X_0}) = \{ F_{\boldsymbol\lambda^\top \mathbf X} : F_{\mathbf X} \in \mathcal{M}_{a,p,\epsilon}^n(F_0) \}$. Thus, the optimization problem boils down to
\begin{align*}
\sup_{\boldsymbol\lambda \in \Delta_n} \left\{ -\sup_{F_Y \in \mathcal{M}_{1,p,\epsilon\|\boldsymbol\lambda\|_b}^{1}(F_{\boldsymbol{\lambda}^\top \mathbf X_0})} \rho_{g_{\kappa}}(Y) \right\}.
\end{align*}
Since $g_{\kappa}$ is concave, by Proposition 4 of \cite{LMWW22}, we have
\begin{equation*}
    \sup_{F_Y \in \mathcal{M}_{1,p,\epsilon\|\boldsymbol\lambda\|_b}^{1}(F_{\boldsymbol{\lambda}^\top \mathbf X_0})} \rho_{g_\kappa}(Y) = \rho_{g_{\kappa}}(\boldsymbol{\lambda}^\top \mathbf X_0) + \epsilon \left( \int_0^1 |1 + \kappa g'(t)|^q \mathrm{d}t \right)^{1/q} \|\boldsymbol\lambda\|_b.
\end{equation*}
This completes the proof.
\end{proof}

{Unlike the previous sections, the term $\|\boldsymbol{\lambda}\|_b$ can penalize concentration, depending on $b$. Thus, when a reference distribution is available, the Wasserstein ambiguity radius acts as a regularization parameter and can discourage (but does not categorically rule out) a concentrated optimum. This effect becomes stronger as the ambiguity radius $\epsilon$ increases.}

\subsection{Moment-Based Ambiguity}
A widely-used uncertainty set in finance, risk management and operations research is defined by the moment information of the underlying distribution when only the first two moments are reliable. We define this uncertainty set  as 
$$
D(\pmb{\mu}, \Sigma) = \{ F_{\mathbf X}: \mathbb{E}[X_i] = \mu_i, \text{Cov}(\mathbf X) = \Sigma \},
$$
where $\Sigma$ is {an $n\times n$ positive-definite matrix}. {For scalar $m\in\mathbb R$ and $s\geq0$, define the univariate moment class}
\begin{equation*}
D_1(m,s)=\{F_Y:\mathbb E[Y]=m,\ \operatorname{SD}(Y)=s\}.
\end{equation*}
For a distortion function $g \in \mathcal{H}$, we let $g^*$ denote its concave envelope defined as $g^*=\inf\{h\in\mathcal H: h \text{ is concave over } [0,1] \text{ and }  h\geq g \}$.

\begin{proposition}\label{prop:moments}
{Suppose $v_{g^*}<\infty$. The robust portfolio selection problem under moment uncertainty reduces to}
\begin{align*}
\sup_{\boldsymbol{\lambda} \in \Delta_n} \inf_{F_{\mathbf X} \in D(\pmb{\mu}, \Sigma)} \left\{ \mathbb{E} [-\boldsymbol\lambda^\top \mathbf X ] - \kappa \rho_g (\boldsymbol\lambda^\top \mathbf X) \right\} = \sup_{\boldsymbol{\lambda} \in \Delta_n} \left\{ -(1 + \kappa g(1))\boldsymbol\lambda^\top\boldsymbol\mu - \kappa v_{g^*}\sqrt{\boldsymbol\lambda^\top{\Sigma}\boldsymbol\lambda} \right\},
\end{align*}
with $v_{g^*}=\sqrt{\int_0^1((g^*)'(t)-g(1))^2\d t}$.
\end{proposition}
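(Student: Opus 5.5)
The plan mirrors the proof of Proposition \ref{th:wasserstein}: first reduce the multivariate moment set to a univariate moment class for the portfolio loss, then invoke a known closed-form worst-case distortion riskmetric over a mean--standard-deviation class.

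\textbf{Step 1 (projection).} For fixed $\boldsymbol\lambda\in\Delta_n$, I will show that
\[
\{F_{\boldsymbol\lambda^\top\mathbf X}:F_{\mathbf X}\in D(\pmb\mu,\Sigma)\}=D_1\bigl(\boldsymbol\lambda^\top\boldsymbol\mu,\sqrt{\boldsymbol\lambda^\top\Sigma\boldsymbol\lambda}\bigr).
\]
The inclusion ``$\subseteq$'' is immediate from linearity of the mean and $\Var(\boldsymbol\lambda^\top\mathbf X)=\boldsymbol\lambda^\top\Sigma\boldsymbol\lambda$. For ``$\supseteq$'', take $Y$ with the prescribed mean $m$ and standard deviation $s=\sqrt{\boldsymbol\lambda^\top\Sigma\boldsymbol\lambda}>0$ (positive by positive definiteness, since $\boldsymbol\lambda\neq 0$). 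Standardize $Z=(Y-m)/s$, and let $\mathbf W\sim \mathrm N(0,\Sigma)$ be independent of $Z$ (the space is atomless). Define
\[
\mathbf X=\boldsymbol\mu+\frac{\Sigma\boldsymbol\lambda}{s}\,Z+\Bigl(\mathbf W-\frac{\Sigma\boldsymbol\lambda\,\boldsymbol\lambda^\top\mathbf W}{s^2}\Bigr).
\]
The second bracket is uncorrelated with $Z$, has zero mean, and has covariance $\Sigma-\Sigma\boldsymbol\lambda\boldsymbol\lambda^\top\Sigma/s^2$. Hence $\mathrm{Cov}(\mathbf X)=\Sigma$, $\E[\mathbf X]=\pmb\mu$, and $\boldsymbol\lambda^\top\mathbf X=m+sZ=Y$. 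This is the standard projection argument used, e.g., in \cite{PWW25}.

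\textbf{Step 2 (rewriting the objective).} Using the projection, the inner infimum becomes
\[
-\sup_{F_Y\in D_1(m,s)}\bigl\{\E[Y]+\kappa\rho_g(Y)\bigr\}=-m-\kappa\sup_{F_Y\in D_1(m,s)}\rho_g(Y),
\]
since the mean is fixed on $D_1$.

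\textbf{Step 3 (the univariate worst case).} I will use the known result for distortion riskmetrics with mean--variance constraints (Theorem 1 of \cite{PWW25}, building on \cite{LSWY18}):
\[
\sup_{F_Y\in D_1(m,s)}\rho_g(Y)=m\,g(1)+s\,v_{g^*},
\]
valid when $v_{g^*}<\infty$. The reasoning behind it has two parts.
\begin{itemize}
\item Write $\rho_g(Y)=\int_0^1 F_Y^{-1}(1-t)\,\d g(t)$, and note that the supremum of $\rho_g$ equals that of $\rho_{g^*}$ over this class, because the class is closed under the rearrangements that attain the concave envelope.
\item With concave $g^*$, subtract the mean term $m g(1)$ and apply Cauchy--Schwarz to
\[
\int_0^1\bigl(F^{-1}(1-t)-m\bigr)\bigl((g^*)'(t)-g(1)\bigr)\d t\le s\,v_{g^*},
\]
with equality attained by choosing a quantile function affine in $(g^*)'$.
\end{itemize}

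\textbf{Step 4 (assembly).} Plugging in $m=\boldsymbol\lambda^\top\boldsymbol\mu$ and $s=\sqrt{\boldsymbol\lambda^\top\Sigma\boldsymbol\lambda}$ yields
\[
-(1+\kappa g(1))\boldsymbol\lambda^\top\boldsymbol\mu-\kappa v_{g^*}\sqrt{\boldsymbol\lambda^\top\Sigma\boldsymbol\lambda},
\]
and taking $\sup_{\boldsymbol\lambda\in\Delta_n}$ finishes the proof.

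The main obstacle is Step 3 in the non-concave case. The attaining quantile function $m+s\bigl((g^*)'(1-\cdot)-g(1)\bigr)/v_{g^*}$ must be nondecreasing, which is guaranteed by the concavity of $g^*$. One must also check that $\rho_g$ and $\rho_{g^*}$ agree on it: $(g^*)'$ is constant on the intervals where $g^*>g$, so $g$ and $g^*$ integrate identically against such a quantile. The degenerate case $v_{g^*}=0$, where the supremum is just $m g(1)$, should also be noted. I would cite this univariate result rather than reprove it.
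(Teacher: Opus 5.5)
Your route is the paper's: reduce $D(\boldsymbol\mu,\Sigma)$ to $D_1(\boldsymbol\lambda^\top\boldsymbol\mu,\sqrt{\boldsymbol\lambda^\top\Sigma\boldsymbol\lambda})$, then apply the univariate mean--SD worst case. The paper cites \cite{P07} for the projection and Theorem 5 of \cite{PWW25} for the univariate step. Your explicit projection construction is correct, provided you build $(Z,\mathbf W)$ in distribution (e.g.\ from one uniform random variable); atomlessness alone does not give a $\mathbf W$ independent of a prescribed $Z$.

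The weak point is Step 3. The class $D_1(m,s)$ fixes $\mathrm{SD}(Y)=s$ with equality, so it is not closed under concentration: averaging $F_Y^{-1}$ over an interval strictly lowers the variance. Your first bullet therefore does not justify $\sup_{D_1(m,s)}\rho_g=\sup_{D_1(m,s)}\rho_{g^*}$. Your closing paragraph gives the argument that does work:
\begin{itemize}
\item $\rho_g\le\rho_{g^*}$ for every $Y$, since $g\le g^*$, $g^*(0)=0$ and $g^*(1)=g(1)$;
\item Cauchy--Schwarz bounds $\rho_{g^*}$;
\item the explicit maximizer has a quantile function that is flat wherever $g^*>g$, so $\rho_g=\rho_{g^*}$ there (for $g$ not upper semicontinuous this may need an approximation).
\end{itemize}
That maximizer exists only when $v_{g^*}>0$.

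Your aside that the supremum is $mg(1)$ when $v_{g^*}=0$ is false for the equality class, and so is the proposition in that case. Take $g(t)=-\sqrt{t(1-t)}$. Then $g\in\mathcal H$ and $g^*\equiv 0$, so $v_{g^*}=0$ and the formula predicts $\sup_{D_1(m,s)}\rho_g=0$. Yet $\rho_g(Y)=-\int_{\mathbb R}\sqrt{F_Y(x)(1-F_Y(x))}\,\d x$, and this is bounded away from $0$:
\begin{itemize}
\item split at a median $M$ of $Y$ and use $F(1-F)\ge\tfrac12\min(F,1-F)$;
\item for $A\ge0$, $\int_0^\infty\sqrt{\p(A>y)}\,\d y\ge\sqrt{\E[A^2]}$, which follows from $y\sqrt{\p(A>y)}\le\int_0^y\sqrt{\p(A>t)}\,\d t$;
\item apply this to $A=(Y-M)_\pm$ and use $\E[(Y-M)^2]\ge s^2$.
\end{itemize}
This gives $\rho_g(Y)\le -s/\sqrt2$ on $D_1(m,s)$.

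Since $\Sigma$ is positive definite, $s=\sqrt{\boldsymbol\lambda^\top\Sigma\boldsymbol\lambda}>0$ on $\Delta_n$. Hence the left-hand side is at least $\sup_{\boldsymbol\lambda\in\Delta_n}\{-\boldsymbol\lambda^\top\boldsymbol\mu+\kappa\sqrt{\boldsymbol\lambda^\top\Sigma\boldsymbol\lambda}/\sqrt2\}$. This is strictly larger than the right-hand side $\sup_{\boldsymbol\lambda\in\Delta_n}\{-\boldsymbol\lambda^\top\boldsymbol\mu\}$.

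The hypothesis should read $0<v_{g^*}<\infty$, or else require $g$ concave. The paper's proof has the same lacuna, because it applies the univariate result to the equality class without this proviso. The defect lies in the statement, not in your route.
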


\begin{proof}
In light of \cite{P07}, we have
\begin{equation*}
\left\{F_{\boldsymbol\lambda^\top\mathbf X}: \mathbb E(X_i)=\mu_i, \operatorname{Cov}(\mathbf X)=\Sigma\right\}=D_1\left(\boldsymbol\lambda^\top\boldsymbol\mu,\sqrt{\boldsymbol\lambda^\top\Sigma\boldsymbol\lambda}\right).
\end{equation*}
 Hence, the inner minimization problem can be rewritten as
\begin{equation*}
    \inf_{F_Y \in {D_1\left(\boldsymbol\lambda^\top\boldsymbol\mu,\sqrt{\boldsymbol\lambda^\top\Sigma\boldsymbol\lambda}\right)}} \{ -\mathbb{E}[Y] - \kappa \rho_g(Y) \}.
\end{equation*}
In light of Theorem 5 of \cite{PWW25}, we have 
\begin{align*}
    \inf_{F_Y \in D_1\left(\boldsymbol\lambda^\top\boldsymbol\mu,\sqrt{\boldsymbol\lambda^\top\Sigma\boldsymbol\lambda}\right)} \{ -\mathbb{E}[Y] - \kappa \rho_g(Y) \} &= -\boldsymbol\lambda^\top\boldsymbol\mu - \kappa \sup_{F_Y \in D_1\left(\boldsymbol\lambda^\top\boldsymbol\mu,\sqrt{\boldsymbol\lambda^\top\Sigma\boldsymbol\lambda}\right)} \rho_g(Y) \\
    &=-\boldsymbol\lambda^\top\boldsymbol\mu - \kappa \left(g(1)\boldsymbol\lambda^\top\boldsymbol\mu+v_{g^*}\sqrt{\boldsymbol\lambda^\top{\Sigma}\boldsymbol\lambda}\right)  \\
    &= -(1 + \kappa g(1))\boldsymbol\lambda^\top\boldsymbol\mu - \kappa v_{g^*}\sqrt{\boldsymbol\lambda^\top{\Sigma}\boldsymbol\lambda}.
\end{align*}
Hence, the optimization problem reduces to
\begin{equation*}
    \sup_{\boldsymbol{\lambda} \in \Delta_n} \left\{  -(1 + \kappa g(1))\boldsymbol\lambda^\top\boldsymbol\mu - \kappa v_{g^*}\sqrt{\boldsymbol\lambda^\top{\Sigma}\boldsymbol\lambda}\right\},
\end{equation*}
which completes the proof.
\end{proof}
The result in Proposition \ref{prop:moments} reveals a mean-variance-type structure. {The objective reflects a trade-off between expected loss, $\boldsymbol\lambda^\top\boldsymbol\mu$, and portfolio volatility, $\sqrt{\boldsymbol{\lambda}^\top \Sigma \boldsymbol{\lambda}}$, thereby introducing a Markowitz-style diversification incentive without guaranteeing an interior optimum.}

\section{Numerical Illustrations}\label{sec:numerical}


{This section provides a sensitivity analysis of the robust objective
\begin{equation}\label{eq:numerical-objective}
V_\rho(\kappa)=\sup_{\boldsymbol\lambda\in\Delta_3}
\inf_{F_{\mathbf X}\in\mathcal E_3(\mathbf F)}
\left\{\mathbb E[-\boldsymbol\lambda^\top\mathbf X]
-\kappa\rho(\boldsymbol\lambda^\top\mathbf X)\right\},
\end{equation}
where every $X_i$ is measured as a loss (larger values are worse).  The parameters are deliberately chosen so that no asset simultaneously has the smallest mean loss and the smallest risk.  Let $Z_1,Z_2,Z_3\sim\mathrm N(0,1)$, with their joint distribution left unspecified, and define
\begin{equation}\label{eq:numerical-marginals}
X_1=4+2.6Z_1,\qquad X_2=4.8+0.9Z_2,\qquad
X_3=5.147425+0.092541\exp(Z_3).
\end{equation}
Thus $X_3$ is a shifted and scaled lognormal loss.  The two constants in $X_3$ are selected so that $\mathbb E[X_3]=5.3$ and $\mathrm{SD}(X_3)=0.2$.  The normal and lognormal densities are decreasing beyond the quantile levels used below, so the marginal assumptions of Theorems \ref{prop1}--\ref{prop2} are satisfied.  Table \ref{table:numerical-marginals} displays the resulting summary statistics.

\begin{table}[htbp]
\centering
\caption{Marginal inputs and risk values in the numerical illustration.}
\label{table:numerical-marginals}
\def\arraystretch{1.25}
\begin{tabular}{c|c|c|c|c}
\hline
Asset & $\mathbb E[X_i]$ & $\mathrm{SD}(X_i)$ & $\VaR_{0.99}^+(X_i)$ & $R_{0.01,0.10}(X_i)$ \\
\hline
1 & $4.0000$ & $2.6000$ & $10.0485$ & $8.1960$ \\
2 & $4.8000$ & $0.9000$ & $6.8937$ & $6.2525$ \\
3 & $5.3000$ & $0.2000$ & $6.0951$ & $5.6327$ \\
\hline
\end{tabular}
\end{table}

\subsection{Tail-risk measures: switches among concentrated portfolios}

For $\rho\in\{\VaR_{0.99}^+,R_{0.01,0.10}\}$, Theorems \ref{prop1}-\ref{prop2} reduce \eqref{eq:numerical-objective} to the upper envelope of three affine functions,
\begin{equation}\label{eq:numerical-envelope}
V_\rho(\kappa)=\max_{i=1,2,3}\{-m_i-\kappa r_i^\rho\},
\qquad m_i=\mathbb E[X_i],\quad r_i^\rho=\rho(X_i).
\end{equation}
If Assets $i$ and $j$ are adjacent on this envelope, their switching point is
$\kappa_{ij}=(m_j-m_i)/(r_i^\rho-r_j^\rho)$.  The exact calculations produce the regimes in Table \ref{table:numerical-regimes}.  At each displayed threshold the two adjacent vertices tie; away from the thresholds the maximizing vertex is unique.  This presentation separates the concentration result from the identity of the selected asset: increasing risk aversion changes the selected asset, but not the vertex structure of the reported concentrated optimizer.

\begin{center}
\begin{minipage}{0.95\textwidth}
\centering
\captionof{table}{Risk-aversion regimes for the optimal concentrated portfolio.  Here $\mathbf e_i$ denotes full investment in Asset $i$.}
\label{table:numerical-regimes}
\def\arraystretch{1.25}
\begin{tabular}{c|c|c|c}
\hline
Risk functional & $\boldsymbol\lambda^*=\mathbf e_1$ & $\boldsymbol\lambda^*=\mathbf e_2$ & $\boldsymbol\lambda^*=\mathbf e_3$ \\
\hline
$\VaR_{0.99}^+$ & $0\leq\kappa<0.2536$ & $0.2536<\kappa<0.6261$ & $\kappa>0.6261$ \\
$R_{0.01,0.10}$ & $0\leq\kappa<0.4116$ & $0.4116<\kappa<0.8067$ & $\kappa>0.8067$ \\
\hline
\end{tabular}
\end{minipage}
\end{center}

Panels (a) and (b) of Figure \ref{fig:numerical-sensitivity} plot the three asset-specific values in \eqref{eq:numerical-envelope}.  The thick black upper envelope is the robust value.  It makes both switching points visible and avoids basing the conclusion on a small, arbitrary list of values of $\kappa$.

\subsection{Worst-case SD: a heterogeneous-shape counterpoint}

For SD, Theorem \ref{prop3} shows that the worst case is the comonotonic coupling.  Under that coupling we may take $Z_1=Z_2=Z_3=Z$ in \eqref{eq:numerical-marginals}; the two normal losses are affine in $Z$, while the third loss is affine in $\exp(Z)$.  Since
$\mathrm{Corr}(Z,\exp(Z))=1/\sqrt{e-1}=0.7629$, the comonotonic covariance matrix is
\begin{equation}\label{eq:comonotonic-cov-numerical}
\Sigma_c=
\begin{pmatrix}
6.7600&2.3400&0.3967\\
2.3400&0.8100&0.1373\\
0.3967&0.1373&0.0400
\end{pmatrix}.
\end{equation}
Consequently, maximizing the robust mean--SD objective is equivalent to minimizing
$\boldsymbol\lambda^\top\mathbf m+\kappa
\sqrt{\boldsymbol\lambda^\top\Sigma_c\boldsymbol\lambda}$ over $\Delta_3$.  This is a convex program and was solved over a fine grid of $\kappa$ values.  The optimizer is $\mathbf e_1$ for $0\leq\kappa<0.4706$, is $\mathbf e_2$ for $0.4706<\kappa\leq0.6690$, is a mixture of Assets 2 and 3 for $0.6690<\kappa<1.0276$, and is $\mathbf e_3$ for $\kappa\geq1.0276$.  At $\kappa=0.4706$, every mixture of Assets 1 and 2 is optimal because their comonotonic normal losses are perfectly correlated.  Representative solutions are reported in Table \ref{table:numerical-sd}; see panel (c) of Figure \ref{fig:numerical-sensitivity} for a full illustration.

\begin{table}[htbp]
\centering
\caption{Representative optimal portfolios for the worst-case mean--SD objective.}
\label{table:numerical-sd}
\def\arraystretch{1.20}
\begin{tabular}{c|c|c}
\hline
$\kappa$ & $\boldsymbol\lambda^*$ & $V_{\mathrm{SD}}(\kappa)$ \\
\hline
$0$    & $(1,0,0)$                 & $-4.0000$ \\
$0.25$ & $(1,0,0)$                 & $-4.6500$ \\
$0.50$ & $(0,1,0)$                 & $-5.2500$ \\
$0.75$ & $(0,0.2035,0.7965)$        & $-5.4395$ \\
$1.00$ & $(0,0.0082,0.9918)$        & $-5.4999$ \\
$1.25$ & $(0,0,1)$                 & $-5.5500$ \\
\hline
\end{tabular}
\end{table}

\begin{figure}[htbp]
\centering
\includegraphics[width=\textwidth]{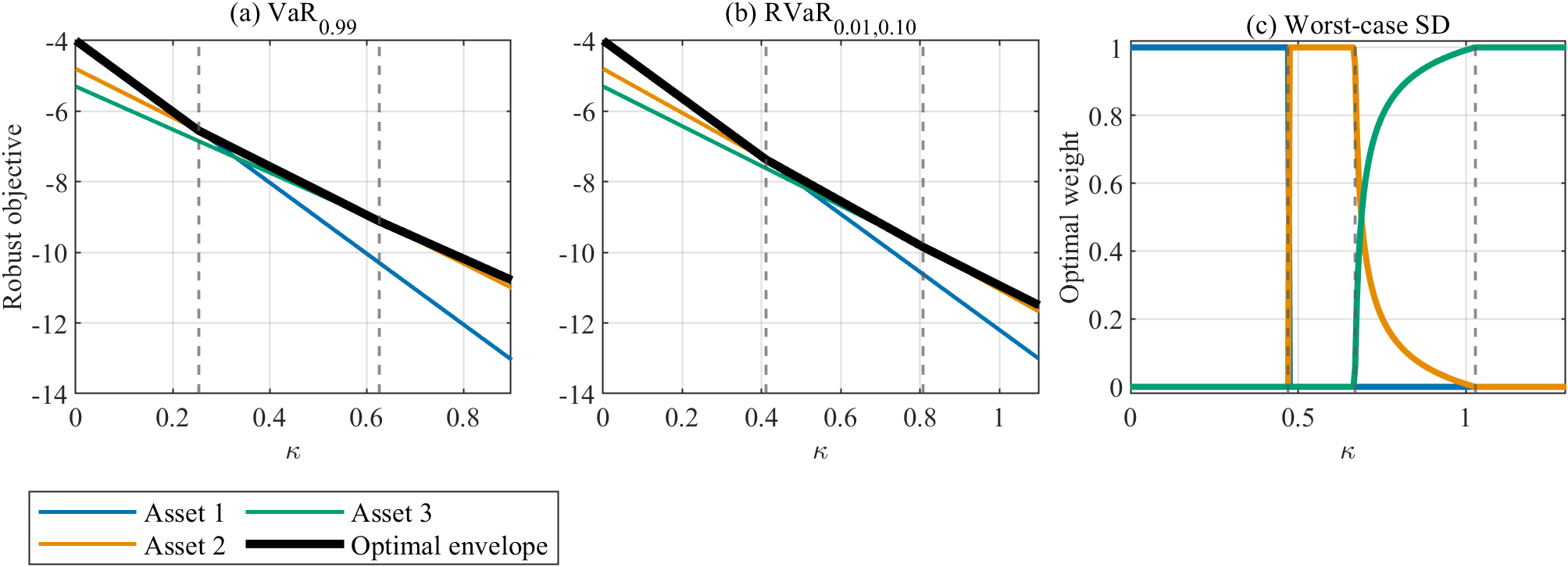}
\caption{Sensitivity to risk aversion.  Panels (a) and (b) show the asset-specific objectives and their optimal envelope under $\VaR_{0.99}^+$ and $R_{0.01,0.10}$; the dashed lines are the analytical switching thresholds.  Panel (c) shows the optimal weights under worst-case SD.  The interval in which Assets 2 and 3 are both held illustrates why Theorem \ref{prop3} requires a common location--scale family for its concentration conclusion.}
\label{fig:numerical-sensitivity}
\end{figure}

\subsection{The value of dependence information}

To isolate the effect of dependence uncertainty, we retain exactly the same three marginals but compare two SD models.  The reference model assumes independence and hence has covariance matrix
$\Sigma_0=\operatorname{diag}(6.76,0.81,0.04)$, whereas the fully robust model uses the comonotonic covariance matrix $\Sigma_c$ in \eqref{eq:comonotonic-cov-numerical}.  For $d\in\{0,c\}$, define the minimized mean-SD criterion
\begin{equation}\label{eq:sd-dependence-comparison}
J_d^*(\kappa)=\min_{\boldsymbol\lambda\in\Delta_3}
\left\{\boldsymbol\lambda^\top\mathbf m+
\kappa\sqrt{\boldsymbol\lambda^\top\Sigma_d\boldsymbol\lambda}\right\}.
\end{equation}
Because $\Sigma_c$ represents the worst-case covariances for nonnegative weights, the robustness premium
$\Delta(\kappa)=J_c^*(\kappa)-J_0^*(\kappa)$ is nonnegative.  We also report the Herfindahl-Hirschman concentration index
$\mathrm{HHI}(\boldsymbol\lambda)=\sum_i\lambda_i^2$, which ranges from $1/3$ for equal weights to $1$ for a fully concentrated portfolio.

\begin{table}[htbp]
\centering
\caption{Dependence sensitivity of the optimal mean--SD portfolio.  The subscripts $0$ and $c$ denote independence and worst-case comonotonic dependence, respectively.}
\label{table:dependence-comparison}
\def\arraystretch{1.20}
\resizebox{0.95\textwidth}{!}{%
\begin{tabular}{c|c|c|c|c|c}
\hline
$\kappa$ & $\boldsymbol\lambda_0^*$ & $\boldsymbol\lambda_c^*$ & $\mathrm{HHI}_0$ & $\mathrm{HHI}_c$ & $\Delta(\kappa)$ \\
\hline
$0.50$ & $(0.3280,0.6720,0)$       & $(0,1,0)$              & $0.5592$ & $1.0000$ & $0.1897$ \\
$0.75$ & $(0.2214,0.7126,0.0660)$  & $(0,0.2035,0.7965)$    & $0.5612$ & $0.6758$ & $0.1372$ \\
$1.00$ & $(0.0591,0.2120,0.7289)$  & $(0,0.0082,0.9918)$    & $0.5797$ & $0.9838$ & $0.0977$ \\
$1.25$ & $(0.0416,0.1580,0.8003)$  & $(0,0,1)$              & $0.6672$ & $1.0000$ & $0.0832$ \\
\hline
\end{tabular}}
\end{table}

At $\kappa=0.50$, the independent model allocates to both Assets 1 and 2, while complete dependence ambiguity changes the decision to full investment in Asset 2.  At $\kappa=0.75$, both solutions are diversified, but they diversify for different reasons and across different assets: the independent solution uses all three assets, whereas the robust solution mixes only Assets 2 and 3 because their comonotonic correlation is below one.  For $\kappa\geq1.0276$, the robust allocation is fully concentrated in Asset 3, while the independent solution remains diversified.  Panels (a) and (b) of Figure \ref{fig:numerical-extensions} show the full weight paths and the corresponding HHI values.  The premium $\Delta(\kappa)$ measures the optimized increase in the mean--SD criterion caused by guarding against unknown dependence; it is not a transaction cost or a regulatory capital charge.

\subsection{Sensitivity to the shape of the third marginal}

The preceding calibration uses a lognormal shape parameter equal to one.  To check whether the tail-risk switching pattern is an artifact of this choice, we vary the shape parameter $\eta$ over $[0.2,1.5]$ while holding the mean and SD of Asset 3 fixed.  Specifically, let
\begin{equation}\label{eq:shape-sensitivity-marginal}
X_3(\eta)=a(\eta)+b(\eta)\exp(\eta Z_3),\qquad
b(\eta)=\frac{0.2}{\sqrt{(e^{\eta^2}-1)e^{\eta^2}}},\qquad
a(\eta)=5.3-b(\eta)e^{\eta^2/2}.
\end{equation}
Then $\mathbb E[X_3(\eta)]=5.3$ and $\mathrm{SD}(X_3(\eta))=0.2$ for every $\eta$ in the experiment.  Hence, any change in the VaR or RVaR boundary is caused by distributional shape rather than by the first two moments.  The density remains decreasing beyond the relevant $0.89$ and $0.99$ quantiles throughout the reported range.

\begin{table}[htbp]
\centering
\caption{Tail-shape sensitivity for Asset 3.  The quantities $\kappa_{23}$ are the risk-aversion levels at which the maximizing vertex switches from Asset 2 to Asset 3.}
\label{table:tail-shape-sensitivity}
\def\arraystretch{1.20}
\begin{tabular}{c|c|c|c|c}
\hline
$\eta$ & $\VaR_{0.99}^+(X_3)$ & $\kappa_{23}^{\VaR}$ & $R_{0.01,0.10}(X_3)$ & $\kappa_{23}^{\mathrm{RVaR}}$ \\
\hline
$0.20$ & $5.8553$ & $0.4815$ & $5.6523$ & $0.8331$ \\
$0.40$ & $5.9438$ & $0.5264$ & $5.6708$ & $0.8596$ \\
$0.55$ & $6.0034$ & $0.5616$ & $5.6753$ & $0.8663$ \\
$0.70$ & $6.0517$ & $0.5938$ & $5.6705$ & $0.8592$ \\
$1.00$ & $6.0951$ & $0.6261$ & $5.6327$ & $0.8067$ \\
$1.30$ & $6.0458$ & $0.5897$ & $5.5635$ & $0.7257$ \\
$1.50$ & $5.9617$ & $0.5365$ & $5.5083$ & $0.6719$ \\
\hline
\end{tabular}
\end{table}

The Asset 1-to-Asset 2 thresholds remain $0.2536$ for VaR and $0.4116$ for RVaR because those two marginals are unchanged.  The Asset 2-to-Asset 3 thresholds vary over $[0.4815,0.6261]$ for VaR and $[0.6719,0.8663]$ for RVaR.  The dependence on $\eta$ is non-monotone: once mean and SD are fixed, increasing lognormal skewness does not necessarily increase a fixed $0.99$ quantile or a fixed quantile average.  Nevertheless, the three-region structure (Asset 1 at low $\kappa$, Asset 2 at intermediate $\kappa$, and Asset 3 at high $\kappa$) persists throughout the experiment, as shown in panels (c) and (d) of Figure \ref{fig:numerical-extensions}.

\begin{figure}[htbp]
\centering
\includegraphics[width=\textwidth]{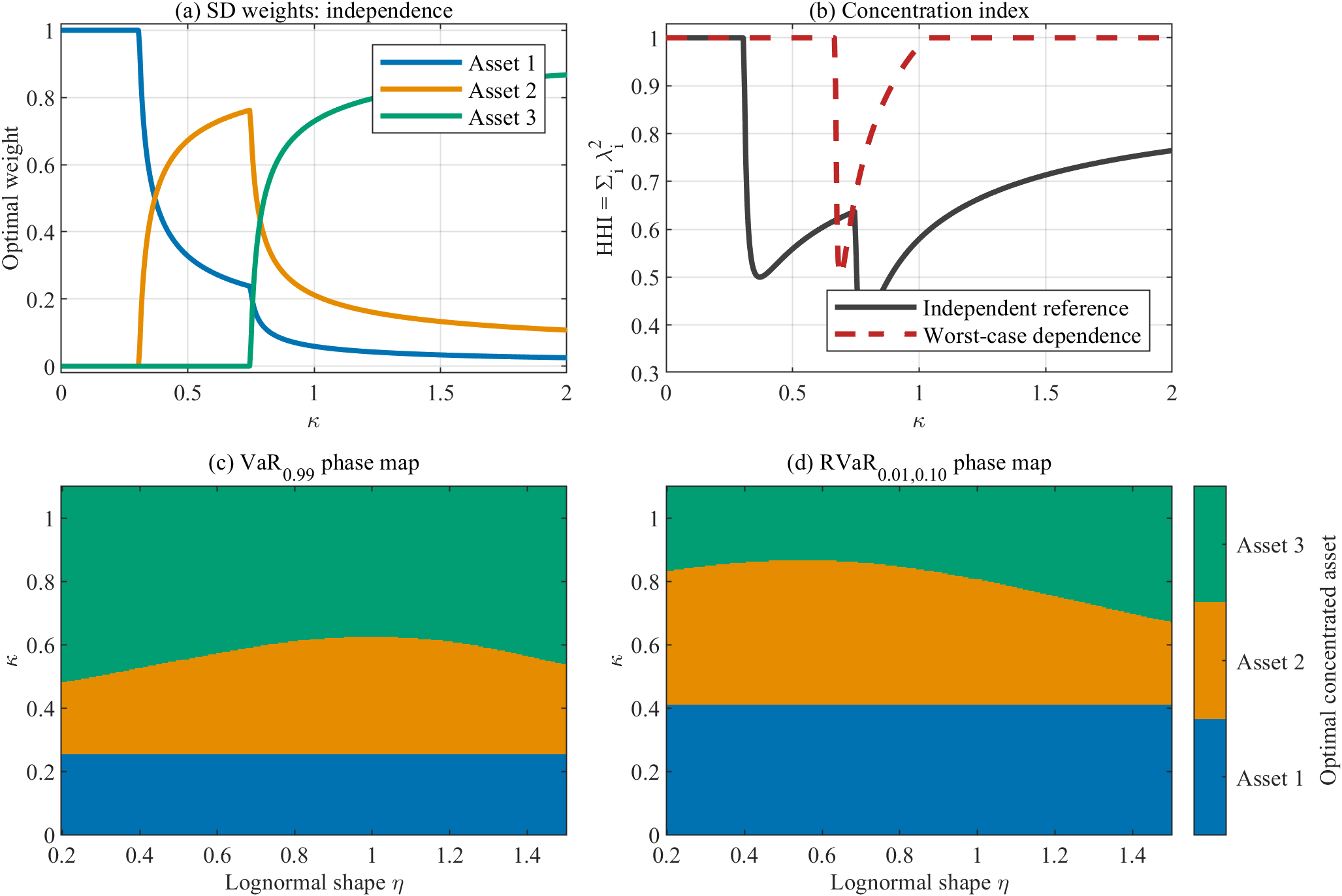}
\caption{Additional sensitivity experiments.  Panel (a) gives the optimal SD weights under the independent reference model.  Panel (b) compares the HHI of the independent and worst-case SD allocations.  Panels (c) and (d) give the identity of the optimal concentrated asset as a function of risk aversion $\kappa$ and the lognormal shape $\eta$, while the mean and SD of Asset 3 remain fixed.}
\label{fig:numerical-extensions}
\end{figure}

Taken together, the experiments make three distinct points.  First, complete dependence ambiguity forces a concentrated optimizer for the two tail-risk measures, even though the selected asset changes with $\kappa$.  Second, heterogeneous marginal shapes can make a diversified portfolio strictly preferable for worst-case SD over an intermediate range of risk aversion.  Third, comparison with the independent reference model shows directly that removing credible dependence information can sharply increase concentration, while the tail-shape experiment demonstrates that the VaR and RVaR regime pattern is robust to substantial changes in skewness. 
}

\section{Application: A Weighted Robustness Framework}\label{sec:applications}
Throughout the previous sections, we have analyzed two extreme forms of dependence modeling: Problem \eqref{eq:opt1}, where the joint distribution $F$ is completely specified, and Problem \eqref{eq:opt2}, where only the marginal distributions are known and the dependence structure is completely unspecified, leading to a worst-case analysis. Neither of these two cases is fully realistic in  practical applications. The former ignores dependence uncertainty, while the latter often leads to overly conservative (concentrated) positions that may not align with a firm's actual risk appetite or historical data.

To bridge this gap, we propose a weighted robustness model. {For a robustness weight $\omega \in [0,1]$}, we consider the following optimization problem:
\begin{equation}\label{eq:opt4}
	\min_{\boldsymbol{\lambda} \in \Delta_n} \left( {\omega} \sup_{F_{\mathbf Y} \in \mathcal{E}_n(\mathbf F)} \rho(\boldsymbol{\lambda}^\top \mathbf Y) + {(1-\omega)} \rho(\boldsymbol{\lambda}^\top \mathbf X) \right).
\end{equation}
{In this formulation, $\omega$ represents the weight assigned to robustness against dependence misspecification. When $\omega=0$, the manager relies entirely on the specified model $\mathbf{X}$; when $\omega=1$, the manager adopts a fully robust and worst-case posture. For $0 < \omega < 1$, the objective balances the reference model against the worst-case model.}

{A motivation for this approach is its structural resemblance to the constrained/unconstrained Expected Shortfall blend used in the Fundamental Review of the Trading Book (FRTB); see \cite{BASEL19}. The resemblance is conceptual rather than an identity of formulas.}
The worst-case $\ES$ under dependence uncertainty is analytically tractable and the objective in \eqref{eq:opt4} simplifies to:
\begin{equation}\label{eq:opt5}
	\min_{\boldsymbol{\lambda} \in \Delta_n} \left( {\omega} \sum_{i=1}^n \lambda_i \ES_p(X_i) + {(1-\omega)} \ES_p(\boldsymbol{\lambda}^\top \mathbf X) \right).
\end{equation}
This optimization problem becomes computationally tractable when $\mathbf{X}$ follows a multivariate normal distribution $\mathrm{N}(\boldsymbol{\mu}, \Sigma)$ or, more generally, a multivariate elliptical distribution. The same structural simplification holds for any coherent and CA risk measure, allowing the risk manager to analytically trade off the diversification benefits of the joint model against the additive penalty of the worst-case scenario. {Thus, our formulation provides a tractable portfolio model for studying the trade-off between reference-model and worst-case risk. Implementing an actual FRTB capital calculation would additionally require the prescribed regulatory aggregation rules.}

{We illustrate the result with a concrete example.} Let $\mathbf{X} \sim \text{N}(\boldsymbol{\mu}, \Sigma)$ with $n = 3$ assets, where:
\begin{equation*}
    \boldsymbol{\mu} = \begin{pmatrix} 4.9 \\ 4.1 \\ 5.5 \end{pmatrix}, \quad
    \Sigma = \begin{pmatrix}
        1.8^2 & 0.3 \times 1.8 \times 0.2 & 0.2 \times 1.8 \times 1.0 \\
        0.3 \times 1.8 \times 0.2 & 0.2^2 & 0.25 \times 0.2 \times 1.0 \\
        0.2 \times 1.8 \times 1.0 & 0.25 \times 0.2 \times 1.0 & 1.0^2
    \end{pmatrix},
\end{equation*}
corresponding to $X_1 \sim \text{N}(4.9, 1.8^2)$, $X_2 \sim \text{N}(4.1, 0.2^2)$,
and $X_3 \sim \text{N}(5.5, 1.0^2)$, with pairwise correlations $\rho_{12} = 0.30$,
$\rho_{13} = 0.20$, and $\rho_{23} = 0.25$. 
For $X \sim \text{N}(\mu, \sigma^2)$, the $\ES$ at confidence level
$p \in (0,1)$ admits the closed-form expression:
\begin{equation}\label{eq:ES_normal}
    \ES_p(X) = \mu + \sigma \cdot c_p, \quad \text{where} \quad
    c_p = \frac{\phi(\Phi^{-1}(p))}{1-p},
\end{equation}
and $\phi$ and $\Phi$ denote the standard normal PDF and CDF, respectively. Since $\mathbf{X}$ is multivariate normal, the portfolio $\boldsymbol{\lambda}^\top
\mathbf{X} \sim \text{N}(\boldsymbol{\lambda}^\top \boldsymbol{\mu},\,
\boldsymbol{\lambda}^\top \Sigma \boldsymbol{\lambda})$ for any $\boldsymbol{\lambda}
\in \Delta_n$. Applying \eqref{eq:ES_normal}, the marginal and portfolio ES are:
\begin{align}
    \ES_p(X_i) &= \mu_i + \sigma_i \cdot c_p, \quad i = 1, \dots, n, \label{eq:ES_marginal}\\
    \ES_p(\boldsymbol{\lambda}^\top \mathbf{X}) &= \boldsymbol{\lambda}^\top \boldsymbol{\mu}
    + \sqrt{\boldsymbol{\lambda}^\top \Sigma \boldsymbol{\lambda}} \cdot c_p. \label{eq:ES_portfolio}
\end{align}
Substituting \eqref{eq:ES_marginal} and \eqref{eq:ES_portfolio} into \eqref{eq:opt5},
and using $\sum_{i=1}^n \lambda_i \mu_i = \boldsymbol{\lambda}^\top \boldsymbol{\mu}$,
the objective simplifies to
\begin{equation}\label{eq:obj_simplified}
    \min_{\boldsymbol{\lambda} \in \Delta_n} \left\{
    \boldsymbol{\lambda}^\top \boldsymbol{\mu} + c_p \left(
    {\omega} \, \boldsymbol{\lambda}^\top \boldsymbol{\sigma} +
    {(1-\omega)} \sqrt{\boldsymbol{\lambda}^\top \Sigma \boldsymbol{\lambda}}
    \right) \right\},
\end{equation}
where $\boldsymbol{\sigma} = (\sigma_1, \dots, \sigma_n)^\top$ is the vector of
marginal standard deviations.


{The objective \eqref{eq:obj_simplified} has three components: (i) $\boldsymbol{\lambda}^\top \boldsymbol{\mu}$, the expected loss of the portfolio; (ii) $\omega c_p\boldsymbol{\lambda}^\top \boldsymbol{\sigma}$, the linear worst-case risk contribution; and (iii) $(1-\omega)c_p\sqrt{\boldsymbol{\lambda}^\top \Sigma \boldsymbol{\lambda}}$, the joint-model risk contribution. The second component alone favors an asset with the smallest marginal standard deviation. The third component creates a diversification incentive when $\Sigma$ is positive definite, but the full mean--risk objective can still have a boundary minimizer.}

Here are some special cases:
\begin{enumerate}
    \item {Full trust in the joint model} ({$\omega = 0$}): The objective reduces to:
    \begin{equation*}
        \min_{\boldsymbol{\lambda} \in \Delta_n}
        \left\{ \boldsymbol{\lambda}^\top \boldsymbol{\mu} +
        c_p \sqrt{\boldsymbol{\lambda}^\top \Sigma \boldsymbol{\lambda}} \right\},
    \end{equation*}
    {which is the classical mean--ES minimization problem. Its solution is often diversified, but it need not be interior for arbitrary $\boldsymbol\mu$ and $\Sigma$.}

    \item {Full robustness} ({$\omega = 1$}): The objective reduces to:
    \begin{equation*}
        \min_{\boldsymbol{\lambda} \in \Delta_n}
        \left\{ \boldsymbol{\lambda}^\top \boldsymbol{\mu} +
        c_p \, \boldsymbol{\lambda}^\top \boldsymbol{\sigma} \right\}
        = \min_{\boldsymbol{\lambda} \in \Delta_n}
        \sum_{i=1}^n \lambda_i (\mu_i + c_p \sigma_i)
        = \min_{i=1,\dots,n} \ES_p(X_i),
    \end{equation*}
    which is minimized by concentrating entirely in the asset $j^* = \arg\min_i
    \ES_p(X_i)$. For our parameters, $j^* = 2$ since $\ES_p(X_2) = 4.1 + 0.2 \cdot c_p$
    is the smallest.

    \item {Intermediate} ({$0 < \omega < 1$}): {The optimizer need not move smoothly between regimes because its active set may change. It is characterized by the full Karush-Kuhn-Tucker conditions. In particular, for multipliers $\nu\in\mathbb R$ and $\boldsymbol\eta\geq \mathbf{0}$,}
    \begin{equation*}
\boldsymbol\mu+c_p\left(\omega\boldsymbol\sigma+(1-\omega)
        \frac{\Sigma\boldsymbol\lambda^*}{\sqrt{(\boldsymbol\lambda^*)^\top
        \Sigma\boldsymbol\lambda^*}}\right)-\nu\mathbf 1_n-\boldsymbol\eta=\mathbf0,
        \qquad \eta_i\lambda_i^*=0.
    \end{equation*}
    {Together with primal and dual feasibility, these conditions determine numerically any value of $\omega$ at which the active set changes.}
\end{enumerate}

{We further consider the case $\rho=\mathrm{SD}$. For fixed marginals and nonnegative portfolio weights, the worst-case variance is attained by comonotonic random variables. Hence, the weighted model balances the SD under the reference dependence structure against the SD under comonotonic dependence. For elliptical reference models, the resulting objective resembles \eqref{eq:obj_simplified}. Increasing $\omega$ strengthens the linear worst-case term and may move the optimizer toward concentration, but neither a U-shaped profile nor a monotone or smooth transition is guaranteed without additional assumptions.}
{The cases $\rho=\VaR_\beta$ and $\rho=\mathrm{RVaR}_{\beta_1,\beta_2}$ require separate analysis because VaR and RVaR lack some of the convexity properties used above.}

\section{Concluding Remarks}\label{sec:conclusion}

In this paper, we have investigated the fundamental tension between portfolio diversification and concentration under dependence uncertainty. By introducing a majorization-order-based framework, we show that, while diversification remains an effective tool for risk reduction in the absence of model uncertainty, its benefit may diminish or even reverse when dependence uncertainty is taken into account.

{We have proven that quasi-convexity is a necessary and sufficient condition for a risk functional to be weakly consistent with the majorization order. Through the analysis of a broad range of risk functionals, including $\VaR$, $\ES$, RVaR, and SD, we demonstrated a ``concentration paradox'': under full dependence uncertainty, a robust optimal portfolio may concentrate the investment in a single asset. Structured information, such as a Wasserstein ambiguity set or moment constraints, can introduce a diversification incentive, although it does not guarantee an interior optimum. Finally, we proposed a weighted robustness framework that is structurally analogous to, but not identical with, the FRTB Expected Shortfall blend and that balances a reference model against a worst-case model.}

Our approach offers a distinct perspective on uncertainty in risk management compared with {the related work of} \cite{FLW24} and \cite{LWY26}. While these two articles treat uncertainty as a \emph{subjective judgment}, where an analyst weights different risk models based on confidence, we treat it as \emph{objective} (or model-based) ambiguity. Operating within a robust optimization paradigm, we view the uncertainty set $\mathcal{E}_n(\mathbf{F})$ as {representing a lack of information about the copula}, aiming to protect the portfolio against the worst-case dependence structure. By focusing on the \emph{structural properties} of portfolio weights rather than the weighting of models, we derive concrete implications for allocation, {including conditions favoring concentration rather than diversification}.

{Ultimately, this paper highlights that, although diversification remains a cornerstone of modern portfolio theory, its effectiveness must be evaluated in light of model uncertainty. Under complete dependence uncertainty and the assumptions of our concentration results, an optimal robust strategy can be to invest in the best single asset rather than to spread risk across assets. Future research could extend this framework by incorporating partial dependence information, such as constraints on correlations or other dependence characteristics.}

\vskip -0.2cm
\quad 

\noindent
{\bf Acknowledgements.}
YL acknowledges financial support from the National Natural Science Foundation of China (Grant No. 12401624), Guangdong Science and Technology Program (Grant No. 2024QN11X076), Shenzhen Science and Technology Program (Grant Nos. RCBS20231211090814028, JCYJ20250604141203005, 2025TC0010) and The Chinese University of Hong Kong (Shenzhen) University Development Fund (Grant No. UDF01003336) and is partly supported by the Guangdong Provincial Key Laboratory of Mathematical Foundations for Artificial Intelligence (Grant No. 2023B1212010001). 

\end{document}